\newcommand{\row}{\mathrm{row}}
\newcommand{\qbinom}[2]{\genfrac{[}{]}{0pt}{}{\,#1\,}{#2}_q}
\newcommand{\Prob}{\mathrm{P}}
\newcommand{\rk}{\mathrm{rk}}
\newcommand{\bE}{\mathbf{E}}
\newcommand{\bX}{\mathbf{X}}
\newcommand{\bY}{\mathbf{Y}}
\newtheorem{theorem}{Theorem}
\newtheorem{lemma}[theorem]{Lemma}
\newtheorem{corollary}[theorem]{Corollary}
\begin{document}
\title{The Capacity of a Finite Field Matrix Channel}
\author{Simon R. Blackburn\thanks{Department of Mathematics,
Royal Holloway University of London, Egham, Surrey TW20 0EX, United Kingdom. {\tt s.backburn@rhul.ac.uk}}\,\, and
Jessica Claridge\thanks{School of Mathematics, Statistics and Actuarial Science, University of Essex,
Wivenhoe Park, Colchester, Essex, CO4 3SQ, United Kingdom. {\tt jessica.claridge@essex.ac.uk}\newline\indent
The authors would like to thank the London Mathematical Society, who funded their collaboration via a Scheme~4 Grant reference~42124, and the reviewers for their very useful comments on this article.
}
}
\maketitle

\begin{abstract}

The Additive-Multiplicative Matrix Channel (AMMC) was introduced by Silva, Kschischang and K\"otter in 2010 to model data transmission using random linear network coding. The input and output of the channel are $n\times m$ matrices over a finite field $\mathbb{F}_q$. When the matrix $X$ is input, the channel outputs $Y=A(X+W)$ where $A$ is a uniformly chosen $n\times n$ invertible matrix over $\mathbb{F}_q$ and where $W$ is a uniformly chosen $n\times m$ matrix over $\mathbb{F}_q$ of rank $t$.

Silva \emph{et al.} considered the case when $2n\leq m$. They determined the asymptotic capacity of the AMMC when $t$, $n$ and $m$ are fixed and $q\rightarrow\infty$. They also determined the leading term of the capacity when $q$ is fixed, and $t$, $n$ and $m$ grow linearly. We generalise these results, showing that the condition $2n\geq m$ can be removed. (Our formula for the capacity falls into two cases, one of which generalises the $2n\geq m$ case.) We also improve the error term in the case when $q$ is fixed.
\end{abstract}

\paragraph{Index terms:} Random network coding; matrix channels; channel capacity.

\section{Introduction}
\label{sec:introduction}

The \emph{additive-multiplicative matrix channel} (AMMC) was introduced by Silva, Kschischang and K\"otter~\cite{SilvaKoetter} for applications to network coding, and is defined as follows. Let $m$, $n$ and $t$ be integers where $m$ and $n$ are positive and  $0\leq t\leq \min\{n,m\}$. The input and output of the channel are $n\times m$ matrices over a finite field~$\mathbb{F}_q$. (All matrices in this paper will have entries in $\mathbb{F}_q$.) The AMMC takes a matrix $X$ as input, and outputs the matrix $Y$ defined by:
\begin{equation}
Y=A(X+W),
\end{equation}
where $A$ and $W$ are chosen uniformly and independently at random, so that $A$ is an $n\times n$ invertible matrix, and $W$ is an $n\times m$ matrix of rank $t$.

Before discussing results on the AMMC channel, we provide some context.  Ahlswede, Cai, Li and Yeung~\cite{AhlswedeCai} were the first to realise that allowing intermediate nodes of a network to compute with and modify packets of data can significantly increase the rate of information flow through the network. For multicast problems, Li, Yeung and Cai~\cite{LiYeung} showed that linear network coding (where data packets are regarded as vectors over some finite field $\mathbb{F}_q$, and intermediate nodes in the network forward linear combinations of the packets they receive) is sufficient to maximise information flow. Moreover, \emph{random linear network coding} (RLNC), where linear combinations are chosen at random, achieves capacity with probability exponentially approaching $1$ with the code length~\cite{HoMedard}. See~\cite{BassoliMarques} for a survey of network coding.

Consider the (unrealistically small) example given in Figure~\ref{fig:RNC}. In RLNC the source injects packets into the network; these packets can be thought of as vectors of length $m$ with entries in a finite field~$\mathbb{F}_q$ (where $q$ is a prime power). In practice $m$ and $q$ are taken to be large, but in our example $m=6$ and $q=2$. The packets flow through a network of unknown topology to a sink node. Each intermediate node forwards packets that are random linear combinations of the packets it has received. A sink node attempts to reconstruct the message from the received packets.  To relate our example to the AMMC, we define matrices $X$ and $Y$ whose rows correspond to the packets transmitted by the source and received by the sink, respectively. The matrix $A$ corresponds to the linear combinations of packets computed by the intermediate nodes and the matrix $W$ corresponds to errors introduced during transmission, and so the model assumes exactly~$t$ linearly independent random errors have been introduced. In the example of Figure~\ref{fig:RNC}, the (rank $3$) input matrix $X$ and the output matrix $Y$ are
\[
X=\begin{pmatrix}
1&0&0&0&0&0\\
0&1&0&0&0&0\\
0&0&1&0&0&0\\
1&0&1&0&0&0\\
1&1&0&0&0&0\\
1&1&1&0&0&0\\
\end{pmatrix}\text{ and }
Y=\begin{pmatrix}
0&0&0&0&0&0\\
0&0&0&0&1&1\\
1&0&0&0&0&0\\
0&1&0&0&1&1\\
0&0&0&0&1&1\\
0&1&1&0&1&1\\
\end{pmatrix}.
\]
Here $Y=A(X+W)$ where
\begin{align*}
A&=
\begin{pmatrix}
1& 0& 1& 0& 1& 0\\
0& 1& 0& 0& 1& 1\\
1& 0& 1& 1& 1& 0\\
0& 0& 0& 0&1& 1\\
0& 0& 1& 1& 0& 1\\
0& 1& 0& 0& 0& 1
\end{pmatrix}
\begin{pmatrix}
1& 1& 0& 0& 0& 0\\
0&1& 0& 0& 0& 0\\
0& 0& 0& 1& 0& 0\\
0& 0& 1&1& 0& 0\\
0& 0& 0& 0& 1& 0\\
0& 0& 0& 0& 0& 1
\end{pmatrix}
=
\begin{pmatrix}
1& 1& 0& 1& 1& 0\\
0&1& 0& 0& 1& 1\\
1& 1& 1& 0& 1& 0\\
0& 0& 0&0& 1& 1\\
0& 0& 1& 0& 0& 1\\
0& 1& 0& 0& 0& 1
\end{pmatrix}\text{ and}\\
W&=\begin{pmatrix}
1& 1& 0& 0& 0& 0\\
0&1& 0& 0& 0& 0\\
0& 0& 0& 1& 0& 0\\
0& 0& 1&1& 0& 0\\
0& 0& 0& 0& 1& 0\\
0& 0& 0& 0& 0& 1
\end{pmatrix}^{\!\!-1}
\begin{pmatrix}
0&0&0&0&0&0\\
0&0&0&0&0&0\\
0&0&0&0&0&0\\
0&0&0&0&0&0\\
1&0&1&0&0&0\\
1&1&0&0&1&1
\end{pmatrix}
=
\begin{pmatrix}
0& 0& 0& 0& 0& 0\\
0& 0& 0& 0& 0& 0\\
0& 0& 0& 0& 0& 0\\
0& 0& 0& 0& 0& 0\\
1& 0& 1& 0& 0& 0\\
0& 1& 0& 0& 1& 1
\end{pmatrix}.
\end{align*}

\begin{figure}
\label{fig:RNC}
\begin{tikzpicture}[fill=gray!50, scale=0.5,>=angle 45,shorten
  >=5,shorten <=5,
vertex/.style={circle,inner sep=2,fill=black,draw}]

\node at (0,0) [vertex, name=X,label=west:$X$]{};
\node at (5,5) [vertex, name=U1,label=north:$U_1$]{};
\node at (5,0) [vertex, name=U2,label=south:$U_2$]{};
\node at (5,-5) [vertex, name=U3,label=north:$U_3$]{};
\node at (10,0) [vertex, name=V,label=south east:$V$]{};
\node at (18,0) [vertex, name=Y,label=east:$Y$]{};

\draw (X) edge ["$x_1 \,x_2$",->,bend left=30] (U1);
\draw (X) edge ["$x_3 \,x_3$",->] (U2);
\draw (X) edge ["$x_5 \,x_6$",->,bend right=30,swap] (U3);

\draw (U1) edge ["$u_1 \,u_2$", ->, bend left=30] (V);
\draw (U2) edge ["$u_3 \,u_4$", ->] (V);
\draw (U3) edge ["$u_5 \,u_6$", ->, swap, bend right=30] (V);

\draw (V) edge ["$y_1 \,y_2 \,y_3 \,y_4 \,y_5 \,y_6$",->] (Y);
\end{tikzpicture}
\caption{An example of random network coding. Here we are using binary vectors of length $6$. Source node $X$ sends (say)
$x_1=(100000), x_2=(010000), x_3=(001000), x_4=(101000), x_5=(110000)$, and $x_6=(111000)$ to nodes $U_i$ as shown.\newline
Nodes $U_1$ and $U_2$ send linear combinations of received vectors to $V$, where $u_1=x_1+x_2$, $u_2=x_2$, $u_3=x_4$ and $u_4=x_3+x_4$. The node $U_3$ is faulty, and sends random vectors $u_5=x_5+(101000)$ and $u_6=x_6+(110011)$ to $V$.\newline
Finally, node $V$ sends the vectors $y_i$ to the sink node $Y$, where $y_1=u_1+u_3+u_5$, $y_2=u_2+u_5+u_6$, $y_3=u_1+u_3+u_4+u_5$, $y_4=u_5+u_6$, $y_5=u_3+u_4+u_6$ and $y_6=u_2+u_6$.}
\end{figure}

Alternative finite-field matrix channels, related to the AMMC, have been proposed to model RLNC. N\'obrega, Silva and Uch\^oa-Filho~\cite{NobregaSilva} considered the multiplicative matrix channel (MMC)
\[
Y=AX
\] 
where now $Y$ is a $n\times m$ matrix, $X$ is a $k\times m$ matrix, and the transfer matrix $A$  is a $n \times k$ matrix chosen according to a distribution that first picks the rank of $A$ according to some distribution on ranks, and then choses $A$ uniformly from all matrices of that rank. Such a distribution we call \textit{Uniform Given Rank (UGR)}. This channel, along with other MMC variants with alternative assumptions on the transfer matrix~\cite{SiavoshaniMohajer},~\cite{YangHo}, have been used to model RLNC when no erroneous packets are injected into the network, but there may be link erasures. N\'obrega, Feng, Silva and Uch\^oa-Filho~\cite{NobregaFeng} and Feng, N\'obrega, Kschischang and Silva~\cite{FengNobrega}, considered the MMC and AMMC, respectively, over finite chain rings, thus moving away from the finite-field case. Finally, Blackburn and Claridge~\cite{BlackburnClaridge} considered a generalisation of the AMMC where the rank of the error matrix $W$ varies, allowing the modelling of different error patterns; Silva et al.~\cite{SilvaKoetter} also discuss such channels. N\'obrega, Silva and Uch\^oa-Filho~\cite{NobregaSilva}, and Blackburn and Claridge~\cite{BlackburnClaridge} both show that for the channels they consider, there always exists a UGR input distribution that achieves capacity. This reduces the problem of finding capacity achieving input distributions to finding a `good' distribution on ranks. 

What is the capacity of the AMMC? As is traditional for channels using matrices over a finite field $\mathbb{F}_q$, we take logarithms to the base $q$ when measuring capacity. (So to compute the capacity in bits, the more usual measure, we must multiply by $\log_2 q$.) Silva et al.\  gave good upper and lower bounds for the capacity of the AMMC when $2n\leq m$. In particular, when $n$, $m$ and $t$ are fixed with $2n\leq m$, and when $q\rightarrow \infty$, their paper establishes~\cite[Corollary~10]{SilvaKoetter} that the capacity tends to $(m-n)(n-t)$ when logarithms are taken to the base $q$. Moreover, Silva et al.\  also provide the leading term for the capacity (of the same form) when $q$ is fixed and $m$, $n$ and $t$ grow linearly.
Our paper aims to generalise these results, by dropping the restriction that $2n\leq m$. We will show (in both asymptotic regimes considered in~\cite{SilvaKoetter}) that the formula for the capacity involves two cases, depending on whether $2n\leq m+t$ or not. Indeed, we show that the capacity is approximately $(m-n)(n-t)$ when $2n\leq m+t$ (generalising the situation considered by Silva et al.) and the capacity is approximately $(m-t)^2/4$ when $2n> m+t$. See Theorems~\ref{thm:AMMC_capacity} and~\ref{thm:fixed_q_AMMC} below for precise statements. (Note that when $m+t\geq 2n$ we must have $m\geq 2n-t=n+(n-t)\geq n$, so the formula for the capacity is never negative!) We mention that the error term in Theorem~\ref{thm:fixed_q_AMMC} is a significant improvement on the corresponding statement in~\cite{SilvaKoetter}. These results confirm a conjectured formula for the channel capacity due to Blackburn and Claridge~\cite[Section~VIII]{BlackburnClaridge}.

Our proofs of Theorems~\ref{thm:AMMC_capacity} and~\ref{thm:fixed_q_AMMC} reduce the computation of the AMMC capacity to finding the capacity of a channel we call the $k$-AMMC, namely the AMMC with the extra restriction that input matrices are restricted to have rank $k$. We think that the $k$-AMMC is a natural channel model, of independent interest. We show in Lemma~\ref{lem:symmetric} that a capacity achieving input distribution of the $k$-AMMC can be taken to be uniform, and use this and the fact that the capacity of the AMMC can be achieved when the input distribution is UGR to establish the capacity of the AMMC.

A natural, and practical, variant of the AMMC defines $W$ to be a matrix of rank at most $t$, rather than exactly $t$. This was first considered in~\cite{SilvaKoetter}, where it is pointed out that the capacity is reduced by at most $\log_q (t+1)$ in this context. We note (as in~\cite{SilvaKoetter}) that this term is negligible in both asymptotic regimes we consider. Moreover, the standard coding schemes for the AMMC (using techniques from subspace coding) are able to cope with error matrices of rank less than $t$ without modification. 

The structure of the paper is as follows. In Section~\ref{sec:overview} we present an overview of our approach, giving intuition behind the expected capacity formula, and briefly describe the relationship between the $k$-AMMC and the subspace-based operator channel of K\"otter and Kschischang~\cite{KoetterKschischang} which has become the standard channel for network coding applications. In Section~\ref{sec:preliminaries} we provide some combinatorial bounds from linear algebra that we need. In Section~\ref{sec:k_ammc} we establish the capacity of the $k$-AMMC when $q\rightarrow\infty$ with other parameters fixed. In Section~\ref{sec:AMMC} we use the results from Section~\ref{sec:k_ammc} to establish the capacity of the AMMC in this asymptotic regime. In Section~\ref{sec:fixed_field}, we show how the results in Sections~\ref{sec:k_ammc} and~\ref{sec:AMMC} may be modified to provide the leading term of the capacity when $q$ is fixed and when $n$, $m$ and $t$ grow linearly. Finally, Section~\ref{sec:coding_scheme} describes a coding scheme for the channel which achieves capacity in both the asymptotic regimes considered.

\section{Overview}
\label{sec:overview}

We give an intuitive (and non-rigorous) justification for the AMMC capacity, as a guide to the proofs in the sections which follow. 

We first consider the $k$-AMMC, so the input to the channel is constrained to matrices of rank $k$. We will show that the capacity achieving input distribution is uniform on such matrices. Note that multiplying by $A$ replaces the matrix $X+W$ with a random matrix having the same row space. Moreover, we can show that the output distribution of the AMMC only depends on the row space of $X$, $\row(X)$, rather than $X$ itself. So the channel can be thought of as having row spaces (of $X$ and $X+W$ respectively) as input and output: the receiver is given $\row(Y)=\row(X+W)$, and must deduce information about $\row(X)$ from this. (But note that we cannot avoid considering matrices entirely, as the noise of the channel is defined using matrix addition rather than an operation on subspaces.)

Now $\row(Y)\subseteq\row(X)+\row(W)$. We would expect $\row(X)$ and $\row(W)$ to have no special relationship to each other (as $\row(W)$ is uniformly and independently chosen of dimension~$t$), so we expect the dimension of $\row(X)+\row(W)$ to be close to $\min\{m,k+t\}$. Hence $\rk(Y)=\dim(\row(Y))\leq \min\{m,k+t\}$. Since $\row(Y)$ is the row space of an $n\times m$ matrix, we must have $\rk(Y)\leq\min\{n,m\}$. We show that $\rk(Y)$ is usually close to being as large as possible given these constraints. So we show that with high probability,
\[
\rk(Y)\approx \min\{m,n,k+t\}=\min\{n,k+t-\delta\},
\]
where $\delta=\max\{0,k+t-m\}$.
We will also determine the approximate dimension of $\row(X)\cap\row(Y)$:
\[
\dim(\row(X)\cap\row(Y))\approx \min\{k,n-(t-\delta)\}
\]
with high probability.
The proof that $\rk(Y)$ and $\dim(\row(X)\cap\row(Y))$ can be approximately determined (see Lemmas~\ref{lem:generic_dimensions} and~\ref{lem:qfixed_generic_dimensions}) is the most difficult part of the argument. 

The receiver initially knows nothing of $\row(X)$, so the entropy of $\row(X)$ is equal to the logarithm to the base $q$ of the number of dimension $k$ subspaces of $\mathbb{F}_q^m$. Once the receiver obtains $Y$, the entropy of $\row(X)$ is reduced, as $\row(X)$ is likely to have an intersection of dimension approximately $\min\{k,n-(t-\delta)\}$ with the (known) subspace $\row(Y)$ of dimension approximately $\min\{n,k+t-\delta\}$. Counting these subspaces allows us to determine the capacity of the $k$-AMMC, by taking logarithms to the base $q$.

We then show that the capacity of the AMMC is essentially the maximum of the capacities of the $k$-AMMC as $k$ varies (because there are few possibilities for $k$). When $k$ is large (more precisely, when $k+t> \min\{n,m\}$) the receiver does not obtain much information about $\row(X)$, as this subspace does not normally intersect $\row(Y)$ in a space of large dimension. We show that it is never advantageous to choose a large value of $k$. When $k+t\leq \min\{m,n\}$, we have that $\rk(X+W)= k+t$ with non-trivial probability, and $\row(Y)=\row(X+W)=\row(X)+\row(W)$. Thus the receiver is given a subspace $\row(Y)$ of dimension $k+t$, and knows that $\row(X)$ lies in this subspace. Our formula for the capacity of the channel is approximately $(m-t-k)k$ in this case. The maximum value of this formula is at $k=(m-t)/2$, which always lies in the range $0\leq k\leq m-t$. So the best value for $k$ depends on whether $(m-t)/2\leq n-t$ (when we take $k\approx (m-t)/2$) or $(m-t)/2> n-t$ (when we take $k=n-t$). This is why the formula for the capacity of the AMMC splits into two cases.

The \emph{subspace distance} $d(U,V)$ between two subspaces $U$ and $V$ is defined to be $\dim(U)+\dim(V)-2\dim(U\cap V)$. This notion was introduced by K\"otter and Kschischang~\cite{KoetterKschischang} in their famous 2008 paper on random network coding, and channels with transition probabilities based on subspace distance have been much studied. We remark that the transition probabilities of the AMMC do not depend only on the subspace distance between the row spaces of input and output matrices of the channel, because we would expect an output matrix $Y$ to have larger rank than the input $X$, and there are matrices $Y'$ of smaller rank such that $d(X,Y)=d(X,Y')$. Nevertheless, our results on the expected values of $\rk(Y)$ and $\dim (X\cap Y)$ above can be interpreted as saying that the subspace distance between the row spaces of input and output matrices in the $k$-AMMC is (approximately) constant with high probability, in the two asymptotic regimes we consider. So the $k$-AMMC may be asymptotically approximated by a channel where the subspace distance between input and output row spaces is bounded. (More precisely, setting $\delta=\max\{0,k+t-m\}$, the $k$-AMMC may be asymptotically approximated by an operator channel~\cite{KoetterKschischang} with $\rho$ erasures and $\tau$ errors, where $\rho=k-\min\{k,n-(t-\delta)\}$ and $\tau=\min\{n,k+t-\delta\}-\min\{k,n-(t-\delta)\}$.) This allows subspace decoding techniques to be used in these asymptotic regimes. (See Kschischang~\cite{Kschischang} or Kurz~\cite{Kurz} for recent surveys of subspace codes.) Moreover, because the likely subspace distance between input and output decreases (possibly not strictly) as $t$ decreases, subspace decoding can be used for relatives of the $k$-AMMC where the rank of $t$ varies, with an upper bound $t$: this is, of course, a very natural class of channels in network coding applications.

\section{Preliminaries}
\label{sec:preliminaries}

This section contains various combinatorial results from linear algebra, all well-known, which we will use in the following sections.

The \textit{Gaussian binomial coefficient}, or \emph{$q$-binomial coefficient}, is written $\qbinom{a}{b}$ and defined to be the number of $b$-dimensional subspaces of an
$a$-dimensional space over $\mathbb{F}_q$. A formula for the Gaussian binomial coefficient is (see, for example, Cameron~\cite[\S 9.2]{Cameron})
\[ 
\qbinom{a}{b}= 
\left\{\begin{array}{ll}
\prod_{i=0}^{b-1} \frac{(q^a-q^i)}{(q^b-q^i)} & \text{for } b \leq a, \\
0 & \text{for } b > a. 
\end{array}\right.
\]

The following lemma is, for example, a consequence of the statement and proof of  K\"otter and Kschischang~\cite[Lemma 4]{KoetterKschischang}).

\begin{lemma}
\label{lem:qbinom}
Let $a$ and $b$ be non-negative integers, with $b\leq a$. For any non-trivial prime power $q$,
\[
\qbinom{a}{b}= f(q)\,q^{(a-b)b}
\]
where 
\[
1< f(q)=\prod_{i=0}^{b-1} \left(\frac{1-q^{-a+i}}{1-q^{-b+i}}\right)<\frac{1}{\prod_{j=1}^b(1-q^{-j})}<4.
\]
In particular, the following inequalities all hold:
\begin{itemize}
\item For any $q$, $a$ and $b$,
\begin{equation}
\label{eqn:general_qbinomial}
q^{(a-b)b}<\qbinom{a}{b}\leq q^{(a-b)b+\log_q 4}.
\end{equation}
\item When $q\rightarrow\infty$ with $a$ and $b$ fixed,
\begin{equation}
\label{eqn:q_big_qbinomial}
\qbinom{a}{b}\,/\,q^{(a-b)b}\rightarrow 1.
\end{equation}
\item When $q\rightarrow\infty$ with $a$ and $b$ fixed,
\begin{equation}
\label{eqn:log_qbinomial}
\log_q \qbinom{a}{b}\rightarrow (a-b)b.
\end{equation}
\end{itemize}
\end{lemma}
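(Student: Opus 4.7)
The plan is to work directly from the explicit product formula for $\qbinom{a}{b}$ stated just above the lemma. First I would factor $q^a$ out of each numerator $q^a-q^i$ and $q^b$ out of each denominator $q^b-q^i$ in the product $\prod_{i=0}^{b-1}(q^a-q^i)/(q^b-q^i)$. Since there are $b$ factors, the collected prefactor is $q^{ab}/q^{b^2}=q^{(a-b)b}$, and what remains is exactly the expression $\prod_{i=0}^{b-1}(1-q^{-a+i})/(1-q^{-b+i})$ defining $f(q)$ in the statement. This yields the factorization $\qbinom{a}{b}=f(q)\,q^{(a-b)b}$ immediately.

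The bounds on $f(q)$ are then elementary. For the lower bound, since $i\le b-1\le a-1$ one has $0<q^{-a+i}\le q^{-b+i}<1$, so $1-q^{-a+i}\ge 1-q^{-b+i}>0$ and each factor of $f(q)$ is at least $1$ (strictly greater when $a>b$). For the upper bound, replacing each numerator factor $1-q^{-a+i}$ by $1$ gives $f(q)<\prod_{i=0}^{b-1}(1-q^{-b+i})^{-1}$, and reindexing by $j=b-i$ (so $j$ runs from $1$ to $b$) turns this into $\prod_{j=1}^b(1-q^{-j})^{-1}$, as claimed.

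The only genuinely analytic step, and the one I expect to require the most care, is showing that $\prod_{j=1}^b(1-q^{-j})^{-1}<4$ uniformly in $b$ and in $q\ge 2$. The product is monotone increasing in $b$ and, for each $b$, monotone decreasing in $q$, so it suffices to prove $\prod_{j=1}^{\infty}(1-2^{-j})^{-1}<4$. I would split this into a short explicit head and a tail: use $(1-x)^{-1}\le e^{2x}$ for $0<x\le 1/2$ to estimate $\prod_{j>N}(1-2^{-j})^{-1}\le\exp\!\bigl(2\sum_{j>N}2^{-j}\bigr)=\exp(2^{1-N})$, then multiply by the exactly computed head $\prod_{j=1}^N(1-2^{-j})^{-1}$. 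A modest choice such as $N=4$ already produces a numerical value comfortably below $4$.

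The three displayed inequalities are then immediate consequences of the sandwich $1<f(q)<4$. Inequality~(\ref{eqn:general_qbinomial}) follows by multiplying by $q^{(a-b)b}$ and rewriting the factor $4$ as $q^{\log_q 4}$. For the asymptotic statements, fix $a$ and $b$ and let $q\to\infty$: each ratio $(1-q^{-a+i})/(1-q^{-b+i})$ tends to $1$, so the finite product $f(q)$ tends to $1$, giving~(\ref{eqn:q_big_qbinomial}); taking $\log_q$ and using continuity of the logarithm then yields~(\ref{eqn:log_qbinomial}). Everything but the uniform constant $4$ in the upper bound is routine algebraic manipulation.
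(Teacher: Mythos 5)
Your proof is correct. The paper itself does not prove this lemma at all---it simply cites it as a consequence of K\"otter and Kschischang's Lemma~4---so you have supplied a self-contained argument where the paper defers to a reference. The structure you use (factor $q^a$ and $q^b$ out of the standard product formula, bound each factor of $f(q)$ above by $1/(1-q^{-b+i})$, reindex, then bound the resulting infinite product at $q=2$ by a head-plus-tail computation) is essentially the argument in the cited reference, and every step checks out. Your tail estimate $(1-x)^{-1}\le e^{2x}$ on $(0,1/2]$ is valid (it follows from $-\ln(1-x)=\sum_{k\ge1}x^k/k\le x+x^2/(2(1-x))\le 3x/2$ on that range), and with $N=4$ the head is $1024/315\approx3.251$ and the tail is at most $e^{1/8}\approx1.133$, giving roughly $3.68<4$.

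One small point, which is really a blemish in the lemma as stated rather than in your proof: when $a=b$ (or $b=0$) the product $f(q)$ is empty and equals $1$, so the strict inequalities $1<f(q)$ and $q^{(a-b)b}<\qbinom{a}{b}$ in the lemma fail in those degenerate cases. You correctly flag that strict positivity of $f(q)-1$ requires $a>b$. Since the lemma is only ever invoked downstream in ways that are insensitive to this strictness, the discrepancy is harmless, but it would be cleaner to weaken the lower bounds to $\le$ (or to note the exception).
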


\begin{corollary}
\label{cor:span}
Let $a$ and $b$ be fixed non-negative integers, and let $q$ be a non-trivial prime power. Let $U$ be a space of dimension $a$ over $\mathbb{F}_q$. When vectors $x_1,x_2,\ldots,x_b\in U$ are uniformly and independently chosen,
\[
\Prob\big(\dim\langle x_1,x_2,\ldots,x_b\rangle=\min\{a,b\}\big)\geq \prod_{i=0}^{a-1}(1-q^{i-a}).
\]
In particular, this probability tends to $1$ as $q\rightarrow\infty$.

When $a=b$, we have equality in the expression above, and
\[
\Prob\big(\dim\langle x_1,x_2,\ldots,x_a\rangle=a\big)> 1/4
\]
for any non-trivial prime power $q$.
\end{corollary}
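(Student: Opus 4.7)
The plan is to give an elementary sequential counting argument, choosing $x_1,x_2,\ldots,x_b$ one at a time from $U$ and tracking when each new vector enlarges the dimension of the span. Since $(x_1,\ldots,x_b)$ is uniform on $U^b$, I would express the probability as (number of favourable tuples)$/q^{ab}$.

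I would split into two cases. When $b\leq a$, the event $\dim\langle x_1,\ldots,x_b\rangle=b$ is precisely the event of linear independence. The standard sequential count (each $x_i$ must avoid the subspace $\langle x_1,\ldots,x_{i-1}\rangle$ of size $q^{i-1}$) gives exactly $\prod_{i=0}^{b-1}(q^a-q^i)$ favourable tuples, so the probability is $\prod_{i=0}^{b-1}(1-q^{i-a})$. Each ``missing'' factor $1-q^{i-a}$ for $b\leq i\leq a-1$ lies in $(0,1)$, so padding the product downward yields the claimed lower bound $\prod_{i=0}^{a-1}(1-q^{i-a})$; when $b=a$ no padding is needed, so equality holds. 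When $b\geq a$, the event $\dim\langle x_1,\ldots,x_b\rangle=a$ is implied by the sub-event that already $x_1,\ldots,x_a$ span $U$, which by the previous case has probability exactly $\prod_{i=0}^{a-1}(1-q^{i-a})$.

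For the remaining two claims I would reindex via $j=a-i$ to rewrite $\prod_{i=0}^{a-1}(1-q^{i-a})=\prod_{j=1}^{a}(1-q^{-j})$. The bound $>1/4$ is then immediate from the last inequality displayed in Lemma~\ref{lem:qbinom} (applied with parameter $b=a$ there), and the limit as $q\to\infty$ follows because each finite factor $1-q^{-j}$ tends to $1$.

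There is no real obstacle here: the enumeration is textbook, and the awkward-looking absolute constant $1/4$ is delivered to us by the preceding lemma. The only subtlety is the bookkeeping comparison between products of different lengths needed to absorb the cases $b\leq a$ and $b\geq a$ into a single uniform lower bound.
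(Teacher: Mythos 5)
Your argument is correct and is essentially the same as the paper's: both reduce to the sequential count of linearly independent tuples, pad the product of factors $1-q^{i-a}$ to length $a$ to get a uniform lower bound, and pull the $>1/4$ bound from Lemma~\ref{lem:qbinom}. The only cosmetic difference is that you split explicitly into the cases $b\leq a$ and $b\geq a$, whereas the paper unifies them by working with $\ell=\min\{a,b\}$ and observing that independence of $x_1,\ldots,x_\ell$ suffices in either case.
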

\begin{proof} 
Write $\ell=\min\{a,b\}$. The probability we are interested in is bounded below by the probability that $x_1,x_2,\ldots, x_\ell$ are linearly independent, and is equal to this probability when $a=b$. So
\begin{align*}
\Prob\big(\dim\langle x_1,x_2,\ldots,x_b\rangle=\min\{a,b\}\big)&\geq \Prob\big(\dim\langle x_1,x_2,\ldots,x_\ell\rangle=\ell\big)\\
&=\frac{1}{q^{a\ell}}\left(\prod_{i=0}^{\ell-1}(q^a-q^{i})\right)\\
&=\prod_{i=0}^{\ell-1}(1-q^{i-a})\\
&\geq \prod_{i=0}^{a-1}(1-q^{i-a}),
\end{align*}
with equality when $a=b$. Our probability tends to $1$ as $q\rightarrow\infty$, since each factor in the last product above tends to $1$. Finally, when $a=b$ our probability is at least $1/4$ since $\prod_{i=0}^{a-1}(1-q^{i-a})=\prod_{j=1}^a(1-q^{-j})>1/4$ by Lemma~\ref{lem:qbinom}.
\end{proof}

\begin{corollary}
\label{cor:subspace_count}
Let $n$, $m$, $k$ and $t$ be fixed non-negative integers, with $t\leq \min\{n,m\}$ and $k\leq \min\{n,m\}$.
\begin{itemize}
\item[(a)] When $k+t\leq m$,
\[
\qbinom{m-k}{t} q^{tk} / \qbinom{m}{t}\rightarrow 1\text{ as }q\rightarrow\infty.
\]
\item[(b)] When $k+t\geq m$,
\[
\qbinom{k}{k+t-m} q^{(m-k)(m-t)} / \qbinom{m}{t}\rightarrow 1\text{ as }q\rightarrow\infty.
\]
\end{itemize}
\end{corollary}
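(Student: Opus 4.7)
The plan is to apply equation~\eqref{eqn:q_big_qbinomial} of Lemma~\ref{lem:qbinom} directly to each Gaussian binomial coefficient appearing in the two ratios, and then verify by a short bookkeeping calculation that the exponents of $q$ cancel. Since the ratio of each $\qbinom{\cdot}{\cdot}$ to the corresponding power of $q$ tends to $1$ as $q\to\infty$, the ratio of products of such quantities tends to the product of the individual limits, and the claim will follow provided the leading exponents match.

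For part (a), the hypothesis $k+t\leq m$ gives $t\leq m-k$, so $\qbinom{m-k}{t}$ is nonzero and Lemma~\ref{lem:qbinom} yields
\[
\qbinom{m-k}{t}\Big/q^{(m-k-t)t}\to 1 \quad\text{and}\quad \qbinom{m}{t}\Big/q^{(m-t)t}\to 1.
\]
Hence the expression in question equals $q^{(m-k-t)t+tk-(m-t)t}$ times a factor tending to $1$, and a direct expansion gives $(m-k-t)t + tk - (m-t)t = mt-kt-t^2+tk-mt+t^2 = 0$, so the ratio tends to $1$.

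For part (b), the hypothesis $k+t\geq m$ combined with $t\leq m$ gives $0\leq k+t-m\leq k$, so $\qbinom{k}{k+t-m}$ is nonzero and Lemma~\ref{lem:qbinom} yields
\[
\qbinom{k}{k+t-m}\Big/q^{(m-t)(k+t-m)}\to 1.
\]
Combining with the estimate for $\qbinom{m}{t}$, the ratio of interest equals $q^{(m-t)(k+t-m)+(m-k)(m-t)-(m-t)t}$ times a factor tending to $1$. Factoring $(m-t)$ out of the exponent gives $(m-t)\bigl[(k+t-m)+(m-k)-t\bigr]=(m-t)\cdot 0 = 0$, so the ratio again tends to $1$.

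There is no real obstacle here: the corollary is a routine asymptotic bookkeeping consequence of Lemma~\ref{lem:qbinom}, and the only thing to check carefully is that the nonnegativity conditions needed to invoke that lemma are indeed implied by the hypotheses in each case.
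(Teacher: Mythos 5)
Your proof is correct and follows essentially the same route as the paper's: both apply equation~\eqref{eqn:q_big_qbinomial} of Lemma~\ref{lem:qbinom} to each Gaussian binomial and check that the exponents of $q$ cancel. The paper simply organizes the calculation as an explicit product of two factors each tending to $1$, whereas you spell out the exponent arithmetic, but the substance is identical.
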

\begin{proof} 
To prove~(a), we note that $\qbinom{m-k}{t} q^{tk} / \qbinom{m}{t}$ may be written as
\[
\left(\frac{\qbinom{m-k}{t}}{q^{t(m-k-t)}}\right)\left(\frac{q^{t(m-t)}}{\qbinom{m}{t}}\right).
\]
Both of the factors in the expression above tend to $1$ by~\eqref{eqn:q_big_qbinomial}, so~(a) follows.

Similarly, we may prove~(b)  by writing $\qbinom{k}{k+t-m} q^{(m-k)(m-t)} / \qbinom{m}{t}$ as
\[
\left(\frac{\qbinom{k}{k+t-m}}{q^{(k+t-m)(m-t)}}\right)\left(\frac{q^{t(m-t)}}{\qbinom{m}{t}}\right).\qedhere
\]
\end{proof}

\begin{lemma}
\label{lem:rank_k_matrices}
Let $n$, $m$ and $k$ be fixed non-negative integers such that $k\leq\min\{n,m\}$. Let $\rho(n,m,k)$ be the number of $n\times m$ matrices over~$\mathbb{F}_q$ of rank~$k$. Then
\[
\log_q \rho(n,m,k)\rightarrow (n+m-k)k\text{ as }q\rightarrow\infty.
\]
\end{lemma}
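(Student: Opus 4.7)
\medskip

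The plan is to use the standard exact count $\rho(n,m,k)=\qbinom{m}{k}\prod_{i=0}^{k-1}(q^n-q^i)$, which I would derive by a two-stage choice: first select the row space $R$, a $k$-dimensional subspace of $\mathbb{F}_q^m$ (giving $\qbinom{m}{k}$ options), then pick an ordered $n$-tuple of vectors in $R$ spanning $R$ (these are the rows). The number of spanning $n$-tuples in a $k$-dimensional space is the number of $n\times k$ matrices of rank $k$, namely $\prod_{i=0}^{k-1}(q^n-q^i)$ (pick the $k$ columns as linearly independent vectors in $\mathbb{F}_q^n$).

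With the closed formula in hand, I would take $\log_q$ and handle the two pieces separately. For the $q$-binomial factor, the limit $\log_q\qbinom{m}{k}\to(m-k)k$ is immediate from equation~\eqref{eqn:log_qbinomial} of Lemma~\ref{lem:qbinom}. For the product, I would factor $q^n-q^i=q^n(1-q^{i-n})$, so that
\[
\log_q\prod_{i=0}^{k-1}(q^n-q^i)=nk+\sum_{i=0}^{k-1}\log_q(1-q^{i-n}).
\]
Since $i<n\le n$ (and actually $k\le n$), each exponent $i-n$ is strictly negative, so $q^{i-n}\to 0$ and each summand tends to $0$ as $q\to\infty$; the finite sum therefore tends to $0$, leaving $nk$ in the limit.

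Adding the two contributions gives $(m-k)k+nk=(n+m-k)k$, as required. There is essentially no obstacle: the only thing to watch is that the sum in the second piece has a fixed number $k$ of terms (because $n$, $m$, $k$ are fixed as $q\to\infty$), so the term-by-term limit passes through the sum. If I wanted an alternative derivation that avoids writing down the rank-counting formula from scratch, I could instead observe that choosing a row space and then applying Corollary~\ref{cor:span} (in its $a=b=k$ form, applied inside $R$) to count spanning $n$-tuples gives the same product up to an inessential renaming, but the direct formula seems cleanest.
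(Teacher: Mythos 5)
Your proof is correct and follows essentially the same decomposition as the paper: first choose the row space (a $k$-dimensional subspace of $\mathbb{F}_q^m$, contributing $\qbinom{m}{k}$), then count $n$-tuples of rows spanning that space. The only difference is that you write the second factor explicitly as $\prod_{i=0}^{k-1}(q^n-q^i)$ and analyze the product directly, whereas the paper expresses the same quantity as $c\,q^{nk}$ with $c\to 1$ via Corollary~\ref{cor:span}; these are the same computation, as you yourself note at the end.
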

\begin{proof}
There are $\qbinom{m}{k}$ possibilities for the row space $U$ of a matrix counted by $\rho(n,m,k)$. Suppose $U$ is fixed: we wish to count the number of $n\times m$ matrices $M$ with row space $U$. Writing $x_i$ for the $i$th row of $M$, we are asking for the number of choices for $x_1,x_2,\ldots,x_n\in U$ such that $\langle x_1,x_2,\ldots,x_n\rangle =U$. Since $|U|=q^k$, the number of choices is $c\cdot q^{nk}$ where $c$ is the probability that uniformly and independently chosen vectors $x_1,x_2,\ldots,x_n\in U$ are such that $\langle x_1,x_2,\ldots,x_n\rangle =U$. Corollary~\ref{cor:span} implies that $c\rightarrow 1$ as $q\rightarrow\infty$. So
\begin{align*}
\log_q \rho(n,m,k) &= \log_q\left(\qbinom{m}{k}\right)c q^{kn}\\
&=\log_q\left(\qbinom{m}{k}\right)+\log_q c + nk\\
&\rightarrow (m-k)k+0+nk \text{ by~\eqref{eqn:log_qbinomial}}\\
&=(n+m-k)k,
\end{align*}
as required.
\end{proof}

\section{The $k$-dimensional AMMC}
\label{sec:k_ammc}

For a fixed non-negative integer $k$ such that $k\leq \min\{n,m\}$ we define the \emph{rank $k$ input additive-multiplicative matrix channel,} and write $k$-AMMC, to be the AMMC channel with input matrices $X$ restricted to having rank $k$. This section determines the asymptotic capacity of this channel when $q\rightarrow\infty$ with $n,m,t$ and $k$ fixed.

\begin{lemma}
\label{lem:symmetric}
The capacity of the $k$-AMMC is achieved when the input distribution $\bX$ is uniform over all $n\times m$ matrices $X$ of rank $k$.
\end{lemma}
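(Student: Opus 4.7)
The plan is a standard symmetrisation argument exploiting the fact that the group $G = \mathrm{GL}_n(\mathbb{F}_q) \times \mathrm{GL}_m(\mathbb{F}_q)$ acts on the input space via $(P,Q)\cdot X = PXQ$, preserves rank, and acts transitively on the set of rank-$k$ matrices (two rank-$k$ matrices have the same Smith normal form). Combined with a compatibility between this action and the channel law, this will force the capacity to be achieved by the uniform distribution on rank-$k$ matrices.

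The only delicate step is verifying channel symmetry: if $\bY$ denotes the output when $X$ is input, then the output when $PXQ$ is input has the same distribution as $\bY Q$. Writing the output for input $PXQ$ as $A'(PXQ + W') = (A'P)\bigl(X + P^{-1}W'Q^{-1}\bigr)Q$, I would observe that right-multiplication by the fixed invertible $P$ is a bijection on $\mathrm{GL}_n(\mathbb{F}_q)$, so $A'P$ is still uniform on invertible matrices; similarly the map $W \mapsto P^{-1}WQ^{-1}$ is a bijection on the set of rank-$t$ matrices, so $P^{-1}W'Q^{-1}$ is still uniform on rank-$t$ matrices; and since these bijections act on independent random variables, independence is preserved. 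Hence $(A'P,\,P^{-1}W'Q^{-1})$ has exactly the same joint distribution as $(A,W)$, and the output is distributed as $A(X+W)Q = \bY Q$.

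Next I would fix any capacity-achieving input distribution $\bX^{*}$ (which exists because the set of rank-$k$ matrices is finite) with corresponding channel output $\bY^{*}$, and define a symmetrised input $\bX$ by drawing $(P,Q)$ uniformly from $G$, independently drawing $X'\sim\bX^{*}$, and setting $\bX = PX'Q$. Transitivity of the $G$-action on rank-$k$ matrices, together with uniformity of $(P,Q)$, forces $\bX$ to be uniformly distributed on rank-$k$ matrices.

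Finally I would invoke concavity of mutual information in the input distribution for a fixed channel. Conditional on $(P,Q)=(p,q)$, the joint law of $(\bX,\bY)$ is the image of $(\bX^{*},\bY^{*})$ under the bijective reparametrisation $(X',Y')\mapsto (pX'q,\,Y'q)$, and mutual information is invariant under bijective reparametrisations of each coordinate, so the conditional mutual information equals $C := I(\bX^{*};\bY^{*})$ for every $(p,q)$. Concavity then gives
\[
I(\bX;\bY) \;\geq\; \bE_{(P,Q)}\bigl[\,I(\bX;\bY \mid P,Q)\bigr] \;=\; C,
\]
and since $C$ is the capacity, equality must hold, proving that the uniform distribution achieves capacity. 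The main obstacle is the channel-symmetry computation of the second paragraph, where one must carefully track how the left action by $P$ and the right action by $Q$ interact with the multiplicative noise $A$ and the additive noise $W$; everything else is a routine invocation of transitivity and concavity.
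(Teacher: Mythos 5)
Your argument is correct. The key computation — that input $PXQ$ with independent noise $(A',W')$ produces an output distributed as $\bY Q$, because $(A'P,\,P^{-1}W'Q^{-1})$ has the same joint law as $(A,W)$ — is exactly the channel equivariance that drives the result, and the symmetrisation plus concavity step that follows is sound. (Equivalently, since $(P,Q)\to\bX\to\bY$ is a Markov chain, $I(\bX;\bY)=I(P,Q;\bY)+I(\bX;\bY\mid P,Q)\geq I(\bX;\bY\mid P,Q)=C$.) Your verification that the symmetrised input is uniform on rank-$k$ matrices, via transitivity of $\mathrm{GL}_n\times\mathrm{GL}_m$ and the fact that all fibres of $(p,q)\mapsto pX'q$ over a fixed orbit have equal size, is also correct.

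The paper takes a related but distinct route. It first observes the lemma is a special case of \cite[Theorem~1]{BlackburnClaridge}, and then sketches a direct proof: the same group equivariance, expressed as $\Prob(\bY=Y\mid\bX=X)=\Prob(\bY=NYM\mid\bX=NXM)$, combined with transitivity on each rank class, shows the $k$-AMMC is a \emph{symmetric} channel in Gallager's sense (partition the transition matrix by $\rk Y$; within each block rows and columns are permutations of one another), whereupon \cite[Theorem~4.5.2]{Gallager} immediately gives that uniform input achieves capacity. So both proofs rest on precisely the same $\mathrm{GL}_n\times\mathrm{GL}_m$ symmetry of the channel law; the difference is how the symmetry is converted into a capacity statement. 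The paper's route is shorter because it offloads the final step to a textbook theorem, while your route is self-contained — you replace Gallager's symmetric-channel framework with a direct symmetrisation argument and concavity of mutual information. Both are valid, and your version has the modest advantage of not requiring the reader to know Gallager's definition of symmetry; the paper's has the advantage of brevity and of making the connection to a standard channel-coding framework explicit.
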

Note that, throughout this paper, we use a bold letter to denote a random object (here $\bX$ is a distribution over matrices, in other words a random matrix), and the corresponding non-bold letter (here $X$) to denote a possible outcome of this random process.
\begin{proof}
The lemma is a special case of~\cite[Theorem~1]{BlackburnClaridge}. As an alternative, and more direct, proof we may show that the $k$-AMMC is symmetric in the sense of Gallager~\cite[Chapter~4]{Gallager}: we may partition the transition matrix using subsets of columns so that each part of our partition is a matrix whose rows are permutations of each other and whose columns are permutations of each other. In our case, we partition by the rank of the output matrix $Y$. (To prove that the $k$-AMMC is symmetric, we first show that $\Prob(\bY=Y|\bX=X)=\Prob(\bY=NYM|\bX=NXM)$ for any invertible $n\times n$ and $m\times m$ matrices $N$ and $M$ respectively. We then use the fact, twice, that if two $n\times m$ matrices $A$ and $B$ have the same rank, there exist an invertible $n\times n$ matrix $N$ and an invertible $m\times m$ matrix $M$ so that $NAM=B$.) Once we have shown that the $k$-AMMC is symmetric, the lemma follows by~\cite[Theorem~4.5.2]{Gallager}.
\end{proof}

\begin{lemma}
\label{lem:prob_uniform}
Let $n$, $m$, $k$ and $t$ be fixed non-negative integers, with $k\leq \min\{n,m\}$ and $t\leq\min\{m,n\}$. Consider the $k$-AMMC with uniform input distribution $\bX$ and corresponding output distribution $\bY$. For matrices $X$ and~$Y$, the probability $\Prob(\bX=X|\bY=Y)$ depends only on $\dim(\row(X)\cap\row(Y))$ and $\rk(Y)$.
\end{lemma}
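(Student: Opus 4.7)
The plan is to use Bayes' theorem: since $\bX$ is uniform on rank-$k$ matrices, $\Prob(\bX=X)$ is constant on its support, so for every $X$ with $\rk(X)=k$ we have
\[
\Prob(\bX=X\mid\bY=Y)=\Prob(\bY=Y\mid\bX=X)\cdot\frac{\Prob(\bX=X)}{\Prob(\bY=Y)},
\]
and it suffices to show that $\Prob(\bY=Y\mid\bX=X)$ depends on $(X,Y)$ only through $\rk(Y)$ and $\dim(\row(X)\cap\row(Y))$, and that $\Prob(\bY=Y)$ depends on $Y$ only through $\rk(Y)$.

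For the conditional probability, I would first peel off the multiplication by the random invertible matrix $A$. Since $A$ is invertible, $\row(\bY)=\row(\bX+\bW)$, and conditional on $\bX+\bW=Z$ the matrix $\bY=AZ$ is uniform on the $GL_n$-orbit of $Z$, which is the set of all $n\times m$ matrices with row space $\row(Z)$ (since $GL_n$ acts transitively on matrices with a given row space). This gives
\[
\Prob(\bY=Y\mid\bX=X)=\frac{\qbinom{m}{\rk(Y)}}{\rho(n,m,\rk(Y))}\cdot\Prob\bigl(\row(X+\bW)=\row(Y)\bigr),
\]
so the problem reduces to understanding the distribution of the subspace $\row(X+\bW)$. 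For any $N\in GL_n$ and $M\in GL_m$, the random matrix $N\bW M$ is again uniform on rank-$t$ matrices, so $\Prob(\row(X+\bW)=U)$ is invariant under $(X,U)\mapsto(NXM,UM)$. Transitivity of the $N$-action on matrices with a given row space reduces this to the $GL_m$-action on pairs of subspaces $(\row(X),U)\subseteq\mathbb{F}_q^m$, and the orbits of $GL_m$ on such pairs are classified by the two dimensions together with $\dim(\row(X)\cap U)$ via the standard construction of a basis of $\mathbb{F}_q^m$ adapted to $\row(X)\cap U\subseteq\row(X)+U\subseteq\mathbb{F}_q^m$ (and containing bases of $\row(X)$ and $U$). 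This yields the required dependence of $\Prob(\bY=Y\mid\bX=X)$.

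For the marginal I would expand
\[
\Prob(\bY=Y)=\frac{1}{\rho(n,m,k)}\sum_{X\,:\,\rk(X)=k}\Prob(\bY=Y\mid\bX=X)
\]
and group the summands by $s=\dim(\row(X)\cap\row(Y))$. The number of rank-$k$ matrices $X$ with $\dim(\row(X)\cap\row(Y))=s$ equals the number of $k$-dimensional subspaces of $\mathbb{F}_q^m$ meeting $\row(Y)$ in an $s$-dimensional subspace, times the common number of rank-$k$ matrices with any given row space; both factors depend on $Y$ only through $\rk(Y)$. Combined with the previous paragraph, $\Prob(\bY=Y)$ is a function of $\rk(Y)$ alone, and substituting back into Bayes' formula finishes the proof. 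The main conceptual step is the reduction to a subspace-level classification problem: the two ingredients (transitivity of $GL_n$ on matrices with a given row space, and classification of pairs of subspaces under $GL_m$) are standard in isolation, but invoking them requires first recognising that the channel's output distribution depends only on row spaces, so that the question really concerns pairs of subspaces rather than pairs of matrices.
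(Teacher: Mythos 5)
Your argument is correct and, despite a cleaner organization (factoring out the $A$-multiplication first via Bayes, so that the problem reduces to the distribution of the subspace $\row(X+\bW)$), it rests on exactly the same technical ingredients as the paper's proof: transitivity of $GL_n$ on matrices with a fixed row space, invariance of the uniform rank-$t$ distribution of $\bW$ under left/right multiplication by invertible $N$, $M$, and the classification of pairs of subspaces of $\mathbb{F}_q^m$ under $GL_m$ by their dimensions and intersection dimension via an adapted basis. The paper packages these facts as an explicit bijection $\Omega(X,Y)\to\Omega(X',Y')$ on pairs $(W,A)$, whereas you decouple $A$ from $W$; the marginal-probability step is identical.
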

\begin{proof}
For $n\times m$ matrices~$E$ and $F$, define $\Omega(E,F)$ to be the set of pairs $(W,A)$ where $W$ is an $n\times m$ matrix of rank $t$, $A$ is an invertible $n\times n$ matrix, and $F=A(E+W)$. Note that, since $\bX$ is uniform and the matrices $A$ and $W$ in the $k$-AMMC are chosen uniformly and independently, for any $n\times m$ matrices $X$ and $Y$
\begin{equation}
\label{eqn:p_and_omega}
\Prob(\bX=X\text{ and }\bY=Y)=\kappa |\Omega(X,Y)|
\end{equation}
where $\kappa$ is a constant depending only on $n$, $m$ and $t$. Indeed, in the notation of Lemma~\ref{lem:rank_k_matrices}, we have $\kappa=1/(\rho(n,m,k)\rho(n,n,n)\rho(n,m,t))$.

Let $Y$ and $Y'$ be fixed $n\times m$ matrices of rank $r$. Let $s$ be a positive integer, and let $X$ and $X'$ be two $n\times m$ matrices of rank $k$ such that
\[
\dim(\row(X)\cap\row(Y))=\dim(\row(X')\cap \row(Y'))=s.
\]
We aim to show that $|\Omega(X,Y)|=|\Omega(X',Y')|$. This equality implies that $|\Omega(X,Y)|$ depends only on $s=\dim(\row(X)\cap\row(Y))$ and $r=\rk (Y)$.

Choose a basis $y_1,y_2,\ldots,y_r$ for $\row(Y)$ such that $y_1,y_2,\ldots,y_s$ is a basis for $\row(Y)\cap \row(X)$. Extend this basis to a basis $y_1,y_2,\ldots,y_r,x_1,x_2,\ldots x_{m-r}$ for $\mathbb{F}_q^m$ in such a way that $y_1,y_2,\ldots,y_s,\allowbreak x_1,x_2,\ldots,x_{k-s}$ is a basis for $\row(X)$. Construct a similar basis $y'_1,y'_2,\ldots,y'_r,\allowbreak x'_1,x'_2,\ldots x'_{m-r}$ corresponding to $X'$ and $Y'$. So the vectors $y'_1,\ldots,y'_s$ form a basis for $\row(Y')\cap \row(X')$, the vectors $y'_1,y'_2,\ldots,y'_r$ form a basis for $\row(Y')$, and the vectors $y'_1,y'_2,\ldots,y'_s,\allowbreak x'_1,x'_2,\ldots,x'_{k-s}$ form a basis for $\row(X')$. Let $M$ be the unique (invertible) $m\times m$ matrix such that $y_iM=y'_i$ and $x_iM=x'_i$. Note that right-multiplication by $M$ maps $\row(Y)$ to $\row(Y')$ and maps $\row(X)$ to $\row(X')$. Hence $\row(YM)=\row(Y')$ and $\row(XM)=\row(X')$. Since $XM$ and $X'$ have the same row space, there exists an invertible $n\times n$ matrix $N$ such that $NXM=X'$. Since $\row(Y')=\row(YM)=\row(NYM)$, there exists an invertible $n\times n$ matrix $K$ such that $K(NYM)=Y'$. Define a map $\phi:\Omega(X,Y)\rightarrow\Omega(X',Y')$ by $\phi((W,A))=(NWM,KNAN^{-1})$. The map is a well defined bijection since the following equations are equivalent for matrices $A$ and $W$:
\begin{gather*}
A(X+W)=Y\\
KNA(X+W)M=KNYM\\
KNAN^{-1}(NXM+NWM)=Y'\\
KNAN^{-1}(X'+NWM)=Y'
\end{gather*}
So  $|\Omega(X,Y)|=|\Omega(X',Y')|$, as required.

Since $|\Omega(X,Y)|$ depends only on $\dim(\row(X)\cap\row(Y))$ and $\rk(Y)$, we see from~\eqref{eqn:p_and_omega} that the same is true for $\Prob(\bX=X\text{ and }\bY=Y)$. Moreover, writing
\[
\Prob(\bY=Y)=\sum_E \Prob(\bX=E\text{ and }\bY=Y),
\]
where the sum is over all $n\times m$ matrices $E$ of rank $k$, we see that $\Prob(\bY=Y)$ depends only on $\rk (Y)$, since the number of rank $k$ matrices $E$ with $\dim(\row(E)\cap\row(Y))$ fixed only depends on $\rk(Y)$. Hence $\Prob(\bX=X|\bY=Y)$ depends only on $\dim(\row(X)\cap\row(Y))$ and $\rk(Y)$, as required.
\end{proof}

\begin{lemma}
\label{lem:subspace_intersection}
Let $n$, $m$, $k$ and $t$ be fixed non-negative integers, with $k\leq \min\{n,m\}$ and $t\leq\min\{m,n\}$. 
Let $U$ and $V$ be sampled uniformly and independently at random from the set of all subspaces of $\mathbb{F}_q^m$ of dimension $k$ and dimension $t$ respectively.
Define $\delta=\max\{0,k+t-m\}$. Then $\dim (U\cap V)=\delta$ with probability tending to $1$ as $q\rightarrow\infty$.
\end{lemma}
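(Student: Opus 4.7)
The plan is to reduce the claim to a direct application of Corollary~\ref{cor:subspace_count} after counting subspaces. First I would fix $U$ of dimension $k$ (since $V$ is uniform and independent, the probability in question does not depend on the choice of $U$) and count, for each feasible $s$, the number $N_s$ of $t$-dimensional subspaces $V\subseteq \mathbb{F}_q^m$ with $\dim(U\cap V)=s$. By first choosing the intersection $W=U\cap V$ as an $s$-dimensional subspace of $U$ and then extending it, one obtains
\[
N_s=\qbinom{k}{s}\,q^{(t-s)(k-s)}\,\qbinom{m-k}{t-s},
\]
where the factor $q^{(t-s)(k-s)}\qbinom{m-k}{t-s}$ counts the $(t-s)$-dimensional subspaces of $\mathbb{F}_q^m/W$ meeting $U/W$ trivially. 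The denominator in our probability is of course $\qbinom{m}{t}$.

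Next I would separate the two cases defining $\delta$. If $k+t\leq m$, then $\delta=0$, and by the dimension formula $\dim(U+V)\leq m$ forces no lower bound beyond $0$, so I want $\Prob(\dim(U\cap V)=0)\to 1$. Plugging $s=0$ into the formula above gives $N_0=q^{tk}\qbinom{m-k}{t}$, and Corollary~\ref{cor:subspace_count}(a) yields $N_0/\qbinom{m}{t}\to 1$ as $q\to\infty$. If instead $k+t>m$, then $\delta=k+t-m>0$ is the minimum possible value of $\dim(U\cap V)$, and I want $\Prob(\dim(U\cap V)=\delta)\to 1$. Substituting $s=\delta$ (so that $t-s=m-k$, $k-s=m-t$, and $\qbinom{m-k}{t-\delta}=\qbinom{m-k}{m-k}=1$) yields
\[
N_\delta=\qbinom{k}{k+t-m}\,q^{(m-k)(m-t)},
\]
and Corollary~\ref{cor:subspace_count}(b) gives $N_\delta/\qbinom{m}{t}\to 1$.

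There is no real obstacle: the content is in the counting identity for $N_s$ and in observing that the two expressions appearing in Corollary~\ref{cor:subspace_count} are precisely $N_0$ (case $\delta=0$) and $N_\delta$ (case $\delta>0$). The only thing requiring a little care is the extension-count $q^{(t-s)(k-s)}\qbinom{m-k}{t-s}$, which I would justify by working in the quotient $\mathbb{F}_q^m/W$: a $(t-s)$-subspace there meets $U/W$ trivially iff a suitable matrix representation has free columns, and a standard linear-algebra count gives the stated formula. Once $N_\delta$ is in hand, the lemma follows immediately from the corollary.
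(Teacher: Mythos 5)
Your proof is correct and follows essentially the same route as the paper: fix $U$, count the $t$-dimensional $V$ with the required intersection dimension, and invoke Corollary~\ref{cor:subspace_count}. The only difference is cosmetic — you derive a general formula $N_s=\qbinom{k}{s}q^{(t-s)(k-s)}\qbinom{m-k}{t-s}$ and specialize to $s=0$ and $s=\delta$, whereas the paper simply states the two relevant counts directly; both reduce to the same two applications of Corollary~\ref{cor:subspace_count}.
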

\begin{proof}
Suppose first that $k+t\leq m$, and so $\delta=0$. For any fixed subspace~$U$, the proportion of subspaces $V$ that intersect $U$ trivially is
\[
\qbinom{m-k}{t} q^{tk} \, / \,\qbinom{m}{t},
\]
which tends to $1$ as $q\rightarrow\infty$, by Corollary~\ref{cor:subspace_count}. So the lemma follows in this case.

Now suppose $k+t\geq m$, and so $\delta=k+t-m$. Note that $\dim (U\cap V)=\delta$ if and only if
\[
\dim (U+ V)=\dim(U)+\dim(V)-\dim(U\cap V)=k+t-(k+t-m)=m.
\]
For any fixed subspace $U$, the proportion of subspaces $V$ that intersect $U$  in a subspace of dimension $k+t-m$ is
\[
\qbinom{k}{k+t-m}q^{(m-k)(k-(k+t-m))}\, / \,\qbinom{m}{t},
\]
which tends to $1$ as $q\rightarrow\infty$, by Corollary~\ref{cor:subspace_count}. So the lemma follows.
\end{proof}

\begin{lemma}
\label{lem:generic_dimensions}
Let $n$, $m$, $k$ and $t$ be fixed non-negative integers, with $k\leq \min\{n,m\}$ and $t\leq\min\{m,n\}$. Define $\delta=\max\{0,k+t-m\}$. Let $X$ and $W$ be $n\times m$ matrices, chosen uniformly and independently at random subject to $\rk(X)=k$ and $\rk(W)=t$. Let $Y=A(X+W)$, where $A$ is chosen uniformly at random from all $n\times n$ invertible matrices. Then with probability at least $p$, where $p\rightarrow 1$ as $q\rightarrow\infty$, we have
\begin{gather}
\rk(Y)=\min\{t+k,m,n\},\text{ and }\label{eqn:Y_rank}\\
\dim(\row(X)\cap\row(Y))=\min\{k,n-(t-\delta)\}\label{eqn:XY_intersection}.
\end{gather}
\end{lemma}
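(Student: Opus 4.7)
My plan is to replace the distribution of $(X,W)$ by a more convenient coupled distribution: draw $U$ and $V$ uniformly and independently from the sets of $k$- and $t$-dimensional subspaces of $\mathbb{F}_q^m$, then take the rows $x_1,\ldots,x_n$ of $X$ to be independent and uniform in $U$, and the rows $w_1,\ldots,w_n$ of $W$ to be independent and uniform in $V$. By Corollary~\ref{cor:span}, the events $\rk(X)=k$ and $\rk(W)=t$ hold with probability tending to $1$ as $q\to\infty$, and conditional on both events this model recovers the original distribution of $(X,W)$ (since a uniform rank-$k$ matrix has row space uniform over $k$-dimensional subspaces, and is uniform over matrices with that row space given it). Since $A$ is invertible we have $\row(Y)=\row(X+W)$, so it suffices to verify both conclusions of the lemma for $X+W$ in the coupled model.

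The key observation is that, conditional on $\dim(U\cap V)=\delta$ (which holds with probability tending to $1$ by Lemma~\ref{lem:subspace_intersection}, and gives $\dim(U+V)=k+t-\delta=\min\{k+t,m\}$), the vectors $x_i+w_i$ are independent and uniform in $U+V$: for each $z\in U+V$ there are exactly $|U\cap V|=q^\delta$ pairs $(u,v)\in U\times V$ with $u+v=z$, so $\Prob(x_i+w_i=z)=q^{\delta-k-t}=1/|U+V|$. Corollary~\ref{cor:span} applied inside $U+V$ then gives $\rk(X+W)=\min\{n,k+t-\delta\}=\min\{n,m,k+t\}$ with probability tending to $1$, which is~\eqref{eqn:Y_rank}.

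For~\eqref{eqn:XY_intersection} I would split into two cases. If $k+t-\delta\leq n$, then on the events above $\row(X+W)\subseteq U+V$ and both have dimension $k+t-\delta$, so $\row(X+W)=U+V\supseteq U=\row(X)$ and the intersection has dimension $k=\min\{k,n-(t-\delta)\}$. Otherwise $k+t-\delta>n$; then the symmetry of the iid uniform distribution on $(U+V)^n$ under the general linear group of $U+V$ makes $\row(X+W)$, conditional on having dimension $n$, uniformly distributed over all $n$-dimensional subspaces of $U+V$. Applying Lemma~\ref{lem:subspace_intersection} inside $U+V$ (ambient dimension $k+t-\delta$) to the fixed $k$-dimensional subspace $U$ and the uniform $n$-dimensional subspace $\row(X+W)$ then yields $\dim(U\cap\row(X+W))=\max\{0,k+n-(k+t-\delta)\}=n-(t-\delta)$ with probability tending to $1$ (using $t\leq n$), and this equals $\min\{k,n-(t-\delta)\}$ because $k>n-(t-\delta)$ in this case.

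I expect the main obstacle to be pinning down the two symmetry claims cleanly: first, that conditioning the coupled model on the maximal-rank events reproduces the original uniform-rank distribution; and second, that iid uniform vectors in $U+V$, conditioned on spanning an $n$-dimensional subspace, yield a uniform $n$-dimensional subspace of $U+V$. Both reduce to the transitivity of the appropriate general linear group action, and once they are in hand the lemma follows by a union bound over the constant number of high-probability events used above.
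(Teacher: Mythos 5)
Your proposal is correct and takes a genuinely different route from the paper's in its combinatorial core. Both proofs use the same coupled model (sample $U$ and $V$, then take the rows of $X$ and $W$ iid uniform in $U$ and $V$ respectively) and the same step of transferring unconditional high-probability statements to the conditional world via $\Prob(\bE=1)\to 1$. The difference is in how the two dimension claims are established. Your observation that the rows $y_i=x_i+w_i$ are iid uniform in $U+V$ reduces~\eqref{eqn:Y_rank} to a single application of Corollary~\ref{cor:span} inside $U+V$, and reduces~\eqref{eqn:XY_intersection} in the case $k+t-\delta>n$ to a symmetry argument — $\row(Y)$, conditional on full rank, is a uniform $n$-dimensional subspace of $U+V$ by transitivity of the general linear group of $U+V$ — followed by the one-sided form of Lemma~\ref{lem:subspace_intersection} applied inside $U+V$ with $U$ fixed. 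The paper instead works modulo $U$: it shows the first $t-\delta$ rows of $Y$ span $(U+V)/U$, eliminates the $w$-components to obtain explicit vectors $z_u\in U\cap\row(Y)$ which are iid uniform in $U$, and applies Corollary~\ref{cor:span} once more. Your route is slicker, trading the paper's hands-on elimination for two clean symmetry facts; the paper's construction is more elementary and carries over almost verbatim to the quantitative setting of Lemma~\ref{lem:qfixed_generic_dimensions}, where in your version one would need a little extra care about how conditioning on $\rk(Y)=n$ interacts with the quantitative bound of Lemma~\ref{lem:qfixed_subspace_intersection}. Two points to make explicit in a full write-up: the iid-in-$U+V$ observation holds in the \emph{unconditional} coupled model, not after conditioning on $\row(X)=U$ and $\row(W)=V$, so all the good events should be collected unconditionally and then transferred (as the paper does when dividing by $\Prob(\bE=1)$); and in your first case the equality $\row(X)=U$ is itself a nontrivial high-probability event that must be added to the union bound before you can conclude $\dim(\row(X)\cap\row(Y))=k$.
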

\begin{proof}
We note that the conditions~\eqref{eqn:Y_rank} and~\eqref{eqn:XY_intersection} depend only on $\row(Y)$. Since left-multiplication by an invertible matrix does not change the row space we may assume, without loss of generality, that $A$ is the identity matrix and so $Y=X+W$. So the $i$th row $y_i$ of $Y$ has the form $y_i=x_i+w_i$, where $x_i\in\mathbb{F}^m$ and $w_i\in\mathbb{F}^m$ are the $i$th rows of $X$ and $W$ respectively.

Choose subspaces $U$ and $V$ of $\mathbb{F}_q^m$ of dimensions $k$ and $t$ uniformly and independently at random.  By Lemma~\ref{lem:subspace_intersection}, $\dim (U\cap V)=\delta$ with probability $p_0$, where $p_0\rightarrow 1$ as $q\rightarrow\infty$. So to prove the lemma, it suffices to fix $U$ and $V$ such that $\dim (U\cap V)=\delta$, and then select $X$ and $W$ uniformly at random subject to $\row(X)=U$ and $\row(W)=V$: If we can show that the conditions~\eqref{eqn:Y_rank} and~\eqref{eqn:XY_intersection} hold with probability at least $p_1$ where $p_1\rightarrow 1$ as $q\rightarrow\infty$, then the lemma holds with $p=p_0p_1$.

Fix subspaces $U$ and $V$ of $\mathbb{F}_q^m$ such that $\dim(U)=k$, $\dim(V)=t$ respectively and $\dim(U\cap V)=\delta$. We generate two $n\times m$ matrices $X$ and $W$ by the following random process. We choose vectors $x_1,x_2,\ldots ,x_n\in U$ and vectors $w_1,w_2,\ldots,w_n\in V$ uniformly and independently at random. We set $X$ to be the matrix whose $i$th row is $x_i$, and we set $W$ to be the matrix whose $i$th row is $w_i$. Let $Y=X+W$, the matrix whose $i$th row is $x_i+w_i$. Now $\row(X)\subseteq U$ with equality if and only if the vectors $x_i$ span $U$: this happens with probability at least
\[
\prod_{j=1}^k (1-q^{-j}),
\]
since this expression is the probability that the first $k$ rows of $X$ span $U$. Similarly, the probability that $\row(W)=V$ is at least
\[
\prod_{j=1}^t (1-q^{-j}).
\]
Let $\bE$ be the indicator variable for the event that $\row(X)=U$  and $\row(W)=V$. Since the event that $\row(X)=U$ and the event that $\row(W)=V$ are independent, we see that
\[
\Prob(\bE=1)\geq\prod_{j=1}^k (1-q^{-j}) \cdot\prod_{j=1}^t (1-q^{-j})\rightarrow 1
\]
as $q\rightarrow\infty$.

When $\bE=1$, the matrices $X$ and $W$ are uniformly and independently chosen such that $\row(X)=U$  and $\row(W)=V$. So we are interested in the probability that the conditions~\eqref{eqn:Y_rank} and~\eqref{eqn:XY_intersection} hold conditional on $\bE=1$. However, we begin by considering the unconditional probability.

The quotient space $(U+V)/U$ has dimension $t-\delta$, since $(U+V)/U\cong V/V\cap U$. The image of the $i$th row $y_i$ of $Y$ in  $(U+V)/U$ is $y_i+U=x_i+w_i+U=w_i+U$. Since $w_1,w_2,\ldots,w_{t-\delta}$ are chosen uniformly and independently from $V$, the first $t-\delta$ rows of $Y$ span $U+V$ modulo $U$ with probability tending to $1$  as  $q \rightarrow\infty$,
by Corollary~\ref{cor:span}. When the first $t-\delta$ rows of $Y$ span $U+V$ modulo $U$, there exist field elements $\alpha_{u,j}\in\mathbb{F}_q$ with $t-\delta<u\leq n$ and $1\leq j\leq t-\delta$ such that
\[
w_u=\sum_{j=1}^{t-\delta}\alpha_{u,j}w_j\bmod U.
\]
 Consider the $n-(t-\delta)$ vectors $z_{(t-\delta)+1},z_{(t-\delta)+2},\ldots,z_n$ in $\row(Y)$ defined by
\begin{align}
\nonumber
z_u&=y_u-\sum_{j=1}^{t-\delta}\alpha_{u,j}y_j\\
&=x_u-\left(\sum_{j=1}^{t-\delta}\alpha_{u,j}x_j\right)+\left(w_u-\sum_{j=1}^{t-\delta}\alpha_{u,j}w_j\right).
\label{eqn:z_expression}
\end{align}
The last term in~\eqref{eqn:z_expression} lies in $U$, by our choice of the elements $\alpha_{i,j}$. The remaining terms lie in $U$ since $x_i\in U$. Hence $z_u\in U\cap\row(Y)$. Moreover, the vectors $z_u$ are uniformly and independently chosen from $U$, since the vectors $x_u$ for $t-\delta<u\leq n$ are only involved in the sum~\eqref{eqn:z_expression} for one value of $u$. So, by Corollary~\ref{cor:span}, with probability tending to $1$ we have a set of $n-(t-\delta)$ vectors in $\row(Y)$ which span a subspace of $U$ of dimension $\min\{k,n-(t-\delta)\}$. Thus 
$\dim(\row(X)\cap\row(Y))\geq \min\{k,n-(t-\delta)\}$ with probability tending to $1$. Clearly $\dim(\row(X)\cap\row(Y))\leq \dim(\row(X))\leq k$. Since we are assuming the first $t-\delta$ rows of $Y$ span a subspace of dimension $t-\delta$ modulo $U$, and since the row space of $Y$ is spanned by $n$ vectors, we see that $\dim(\row(Y)\cap U)\leq n-(t-\delta)$. Hence with probability tending to $1$ our random process produces matrices $X$, $W$ and $Y$ such that~\eqref{eqn:XY_intersection} holds.

We now check that our random process produces a matrix $Y$ such that~\eqref{eqn:Y_rank} holds with probability tending to $1$. Since $Y$ is an $n\times m$ matrix, $\rk(Y)\leq\min\{n,m\}$. Since $\row(Y)\subseteq \row(U)+\row(V)$ which is a space of dimension at most $t+k$, we see that $\rk(Y)\leq t+k$ and so $\rk(Y)\leq \min\{t+k,m,n\}$. The argument in the previous paragraph shows that with probability tending to $1$ the row space of $Y$ contains a linearly independent set of size $t-\delta+\min\{k,n-(t-\delta)\}=\min\{t+k-\delta,n\}$. When $t+k\leq m$ we have $\delta=0$ and so $\rk(Y)\geq \min\{t+k,n\}=\min\{t+k,n,m\}$. When $t+k>m$ then $\delta=m-(t+k)$ and so $\rk(Y)\geq \min\{m,n\}= \min\{t+k,n,m\}$. So in either case the equation~\eqref{eqn:Y_rank} holds with probability tending to $1$.

We have shown that our random process produces matrices $X$, $W$ and $Y$ such that~\eqref{eqn:Y_rank} and~\eqref{eqn:XY_intersection} hold with probabilities tending to $1$, and we have shown that $\Prob(\bE=1)=p_2$ where $p_2\rightarrow 1$ as $q\rightarrow\infty$. So~\eqref{eqn:Y_rank},~\eqref{eqn:XY_intersection} and $\bE=1$ simutaneously hold with probability $p_3$ where $p_3\rightarrow 1$ as $q\rightarrow\infty$. Hence
\begin{align*}
\Prob(\text{conditions~\eqref{eqn:Y_rank} and~\eqref{eqn:XY_intersection} hold}|\bE=1) &= \frac{\Prob(\text{\eqref{eqn:Y_rank} and~\eqref{eqn:XY_intersection} hold, and $\bE=1$})}{\Prob(\bE=1)}\\
&=p_3/p_2\rightarrow 1
\end{align*}
as $q\rightarrow\infty$. So the lemma follows.
\end{proof}

\begin{theorem}
\label{thm:k_AMMC_capacity}
Let $n$, $m$, $k$ and $t$ be fixed non-negative integers, with $k\leq \min\{n,m\}$ and $t\leq\min\{m,n\}$. Define integers $\delta$, $r$ and $s$ by $\delta=\max\{0,k+t-m\}$, $r=\min\{t+k,m,n\}$ and $s=\min\{k,n+\delta-t\}$. The capacity of the $k$-AMMC (where we take logarithms to the base $q$) tends to $(m+s-k-r)s$
as $q\rightarrow\infty$.
\end{theorem}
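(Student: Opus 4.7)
The plan is to apply Lemma~\ref{lem:symmetric} to take the capacity-achieving input distribution to be the uniform distribution $\bX$ on rank-$k$ matrices, so that the capacity equals $I(\bX;\bY) = H(\bX) - H(\bX\mid\bY)$. The first term is immediate from Lemma~\ref{lem:rank_k_matrices}: $H(\bX) = \log_q \rho(n,m,k) \to (n+m-k)k$ as $q\to\infty$. The bulk of the work lies in computing $H(\bX\mid\bY)$, which I claim tends to $(n+m-k)k - (m+s-k-r)s$.

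The key observation from Lemma~\ref{lem:prob_uniform} is that $\Prob(\bX=X\mid\bY=Y)$ depends on $X$ only through the intersection dimension $d=\dim(\row(X)\cap\row(Y))$. Writing $p_d(Y)$ for this common value and $N_d(Y)$ for the number of rank-$k$ matrices $X$ with $\dim(\row(X)\cap\row(Y))=d$, the conditional distribution is uniform on the ``shell'' of each fixed $d$. By Lemma~\ref{lem:generic_dimensions}, the joint event $\{\rk(\bY)=r,\ \dim(\row(\bX)\cap\row(\bY))=s\}$ holds with probability tending to $1$. It follows that for a typical output (one with $\rk(Y)=r$), the $d=s$ shell carries essentially all the conditional mass, so $N_s(Y)\,p_s(Y)\to 1$, forcing $p_s(Y)\to 1/N_s(Y)$ and hence $H(\bX\mid\bY=Y)\to\log_q N_s(Y)$.

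It then remains to count $N_s(Y)$ for such a $Y$. Fixing $Y$ with row space $R$ of dimension $r$, I would count the $k$-dimensional subspaces $U\subseteq\mathbb{F}_q^m$ with $\dim(U\cap R)=s$ in two steps: first choose the intersection inside $R$ in $\qbinom{r}{s}$ ways, then choose a complementary $(k-s)$-dimensional subspace in the quotient $\mathbb{F}_q^m/(U\cap R)$ meeting $R/(U\cap R)$ trivially, giving $\qbinom{m-r}{k-s}q^{(r-s)(k-s)}$ further choices. Corollary~\ref{cor:span} provides $c\,q^{nk}$ with $c\to 1$ for the number of rank-$k$ matrices with any given row space. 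Multiplying, taking logs, and applying Lemma~\ref{lem:qbinom},
\[
\log_q N_s(Y) \to (r-s)s + (m-r-k+s)(k-s) + (r-s)(k-s) + nk.
\]
Using $(r-s)s+(r-s)(k-s)=(r-s)k$ and collecting the $(k-s)$ factor, this equals $(n+m-k)k - (m+s-k-r)s$, and subtracting from $H(\bX)$ yields $C\to(m+s-k-r)s$.

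The main technical point is making the asymptotic equality $H(\bX\mid\bY=Y)\to\log_q N_s(Y)$ on typical $Y$ rigorous, and controlling the atypical contributions. The upper bound on $H(\bX\mid\bY)$ uses piecewise uniformity together with the crude bound $\log_q(\text{support})\le (n+m-k)k$ on each shell, weighted by the probability of that shell; the lower bound retains only the $d=s$ shell and exploits $p_s(Y)\le 1/N_s(Y)$ together with $N_s(Y)p_s(Y)\to 1$. The atypical $Y$ (those with $\rk(Y)\ne r$) contribute at most $\Prob(\rk(\bY)\ne r)\cdot(n+m-k)k=o(1)$ because the factor $(n+m-k)k$ is bounded while the probability vanishes by Lemma~\ref{lem:generic_dimensions}. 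I expect this bookkeeping with vanishing error terms to be the most delicate part, everything else being a matter of assembling the lemmas already proved.
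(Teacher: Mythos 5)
Your proposal is correct and follows essentially the same route as the paper's proof: uniform input via Lemma~\ref{lem:symmetric}, $H(\bX)$ from Lemma~\ref{lem:rank_k_matrices}, concentration on the typical $(r,s)$ pair via Lemma~\ref{lem:generic_dimensions}, conditional uniformity on each ``shell'' from Lemma~\ref{lem:prob_uniform}, the same subspace count $\qbinom{r}{s}\qbinom{m-r}{k-s}q^{(r-s)(k-s)}\cdot q^{nk}$, and the same algebra. The only difference is cosmetic bookkeeping: you decompose $H(\bX\mid\bY)$ per output $Y$ into shells, whereas the paper conditions directly on the joint event $\{\rk(\bY)=r,\dim(\row(\bX)\cap\row(\bY))=s\}$; both are handled with the same crude-bound-plus-vanishing-probability argument.
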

\begin{proof}
Let $\bX$ be a capacity achieving input distribution to the $k$-AMMC, and let $\bY$ be the corresponding output distribution. Lemma~\ref{lem:symmetric} shows that we may take $\bX$ to be the uniform distribution on the $n\times m$ matrices of rank~$k$. We need to calculate $I(\bX;\bY)=H(\bX)-H(\bX|\bY)$ in this case.

We see that $H(\bX)\rightarrow (m-k)k+nk$, by Lemma~\ref{lem:rank_k_matrices}, since $\bX$ is uniform. We now provide a value for the limit of $H(\bX|\bY)$.

By Lemma~\ref{lem:generic_dimensions},
\begin{align*}
H(\bX|\bY)&=\sum_{r=0}^{\min\{n,m\}}\sum_{s=0}^{\min\{r,k\}} \big(\Prob(\rk(Y)=r,\dim(\row(X)\cap\row(Y))=s))\cdot\\
&\quad\quad\quad\quad\quad\quad\quad H(\bX|\bY,\rk(\bY)=r,\dim(\row(\bY)\cap\row(\bX))=s)\big)\\
&\geq p\, H\big(\bX|\bY,\rk(\bY)=\min\{t+k,m,n\},\\
&\quad\quad\quad\quad\quad\quad\dim(\row(\bY)\cap\row(\bX))=\min\{k,n-(t-\delta)\}\big),
\end{align*}
where $p\rightarrow 1$ as $q\rightarrow\infty$. Similarly,
\begin{align*}
H(\bX|\bY)&\leq p\, H\big(\bX|\bY,\rk(\bY)=\min\{t+k,m,n\},\\
&\quad\quad\quad\quad\quad\quad\dim(\row(\bY)\cap\row(\bX))=\min\{k,n-(t-\delta)\}\big)\\
&\quad +(1-p)mn.
\end{align*}
Since $(1-p)mn\rightarrow 0$ as $q\rightarrow\infty$,
\begin{align}
H(\bX|\bY)&\rightarrow H\big(\bX|\bY,\rk(\bY)=\min\{t+k,m,n\},\nonumber\\
&\quad\quad\quad\quad\quad\quad\dim(\row(\bY)\cap\row(\bX))=\min\{k,n-(t-\delta)\}\big).
\label{eqn:conditional_entropy}
\end{align}

By Lemma~\ref{lem:prob_uniform}, the distribution of $\bX$ given $\rk(Y)=r$ and $\dim(\row(Y)\cap\row(\bX))=s$ is uniform, for any fixed $r$ and $s$. In this situation, the number of possibilities for the row space of $\bX$ is
\[
\qbinom{r}{s}\qbinom{m-r}{k-s}q^{(k-s)(r-s)}.
\]
Lemma~\ref{lem:qbinom} shows that the logarithm to the base $q$ of this expression tends to $(r-s)s+(m-k+s-r)(k-s)+(k-s)(r-s)=(r-s)s+(m-k)(k-s)$. By Corollary~\ref{cor:span}, the logarithm to the base $q$ of the number of possibilities for $\bX$ tends to $(r-s)s+(m-k)(k-s)+nk$, and so the entropy of $\bX$ given $\rk(Y)=r$ and $\dim(\row(Y)\cap\row(\bX))=s$ tends to $(r-s)s+(m-k)(k-s)+nk$. Hence, by~\eqref{eqn:conditional_entropy},
$H(\bX|\bY)\rightarrow (r-s)s+(m-k)(k-s)+nk$ where $r=\min\{t+k,m,n\}$ and $s=\min\{k,n-(t-\delta)\}$. Thus
\begin{align*}
I(\bX;\bY)&\rightarrow \left((m-k)k+nk\right)-\left((r-s)s+(m-k)(k-s)+nk\right)\\
&=(m-k)s-(r-s)s
\end{align*}
and the theorem follows.
\end{proof}

\section{The AMMC}
\label{sec:AMMC}

We are now in a position to establish the capacity of the AMMC when $q\rightarrow\infty$.

\begin{theorem}
\label{thm:AMMC_capacity}
Let $n$,$m$ and $t$ be fixed non-negative integers such that $t\leq \min\{m,n\}$. When $m+t\geq 2n$, the capacity of the AMMC tends to
$(m-n)(n-t)$ as $q\rightarrow\infty$. Otherwise, the capacity of thfe AMMC tends to $\lceil (m-t)/2\rceil\cdot \lfloor (m-t)/2\rfloor$ as $q$ tends to $\infty$. (We take logarithms to the base $q$ when computing capacity.)
\end{theorem}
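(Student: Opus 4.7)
The plan is to reduce the AMMC capacity to the maximum of the $k$-AMMC capacities (Theorem~\ref{thm:k_AMMC_capacity}) and then optimize over $k$.

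By the results of Blackburn and Claridge~\cite{BlackburnClaridge}, there is a capacity-achieving UGR input distribution $\bX$ for the AMMC. Let $\mathbf{K}$ denote its rank and let $(p_k)$ be the distribution of $\mathbf{K}$, so that given $\mathbf{K}=k$ the conditional distribution of $\bX$ is uniform on the $n\times m$ matrices of rank $k$. The chain rule for mutual information (using that $\mathbf{K}$ is a function of $\bX$) gives
\[
I(\bX;\bY) = I(\mathbf{K};\bY) + \sum_k p_k\, I(\bX;\bY \mid \mathbf{K}=k).
\]
By Lemma~\ref{lem:symmetric}, $I(\bX;\bY \mid \mathbf{K}=k)$ equals the $k$-AMMC capacity $C_k^{(q)}$. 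Since $I(\mathbf{K};\bY) \leq H(\mathbf{K}) \leq \log_q(\min\{n,m\}+1) \to 0$, the AMMC capacity $C^{(q)}$ satisfies $C^{(q)} \leq \max_k C_k^{(q)} + o(1)$, while concentrating the input on rank $k^*$ (so that $\mathbf{K}$ is constant) gives $C^{(q)} \geq C_{k^*}^{(q)}$ for each $k^*$. Applying Theorem~\ref{thm:k_AMMC_capacity}, we obtain
\[
\lim_{q\to\infty} C^{(q)} = \max_{k}\, (m + s_k - k - r_k)\, s_k,
\]
where $\delta_k = \max\{0,k+t-m\}$, $r_k = \min\{t+k,m,n\}$ and $s_k = \min\{k,\, n+\delta_k-t\}$.

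The remainder is a case analysis. Writing $f(k) := (m+s_k-k-r_k)s_k$: in Case~A ($k+t \leq m$ and $k \leq n-t$) one computes $f(k) = (m-k-t)k$; in Case~B ($k+t \leq m$, $k > n-t$) one gets $f(k) = (m-k-t)(n-t)$, which is decreasing in $k$ and at $k=n-t+1$ takes value $(m-n-1)(n-t)$, strictly less than the Case~A value $(m-n)(n-t)$ at $k=n-t$; in Case~C ($k+t > m$) a brief split on $m \leq n$ versus $m > n$ shows $f(k) = 0$. So the problem reduces to optimizing $(m-k-t)k$ over integers $k \in [0,\min\{n-t,m-t\}]$, a downward parabola with real maximum at $k^\star = (m-t)/2$. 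When $m+t \leq 2n$ the point $k^\star$ lies in the feasible interval (since $(m-t)/2 \leq n-t$), and the integer maximum is $\lceil (m-t)/2 \rceil \lfloor (m-t)/2 \rfloor$. When $m+t > 2n$, the hypothesis $t \leq n$ forces $m \geq n$, so $\min\{n-t,m-t\}=n-t < k^\star$ and the constrained optimum is attained at $k=n-t$, giving $(m-n)(n-t)$. The two formulas agree on the common boundary $m+t=2n$.

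The main obstacle is the case analysis verifying that Cases~B and~C never beat Case~A, with several sub-cases to handle (particularly Case~C, depending on the relative sizes of $m$ and $n$). Everything else follows directly from the chain rule, the bound on $H(\mathbf{K})$, and Theorem~\ref{thm:k_AMMC_capacity}.
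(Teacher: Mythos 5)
Your proof is correct and reaches the theorem by essentially the same route as the paper: reduce the AMMC capacity to $\max_k C_k^{(q)}$ up to a $\log_q(\min\{n,m\}+1)$ term (which vanishes as $q\to\infty$), then run exactly the same case analysis on $(m+s_k-k-r_k)s_k$ over $k$. The one structural difference is in how the upper bound is established. You invoke the Blackburn--Claridge UGR result to pick a capacity-achieving input that is uniform conditional on its rank, which lets you identify $I(\bX;\bY\mid\rk(\bX)=k)$ \emph{exactly} with the $k$-AMMC capacity $C_k^{(q)}$. The paper instead argues for an \emph{arbitrary} input distribution $\bX$: it shows
\[
I(\bX;\bY)\ \leq\ I(\bX;\bY,\rk(\bX))\ =\ H(\rk(\bX))+I(\bX;\bY\mid\rk(\bX)),
\]
and then bounds each term $I(\bX;\bY\mid\rk(\bX)=k)$ above by the $k$-AMMC capacity using only the definition of capacity, so it never needs the UGR fact (and never needs equality). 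The two arguments give the same bound $C^{(q)}\leq c+\log_q(\min\{n,m\}+1)$, since your $I(\rk(\bX);\bY)\leq H(\rk(\bX))$ is precisely the slack the paper drops. Your route is marginally shorter but imports an external result, while the paper's upper bound is self-contained given only Lemma~\ref{lem:symmetric}. Both lower bounds, and the optimisation over $k$ (including the Case~C check that $k+t>m$ forces $m+s-k-r=0$ under both $m\leq n$ and $m>n$), match the paper.
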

\begin{proof}
We need to find the maximal mutual information $I(\bX;\bY)$ over all input distributions $\bX$ to the AMMC, where $\bY$ is the corresponding output distribution.

Let $c$ be the maximum capacity of the $k$-AMMC, as $k$ ranges over the set $\{0,1,\ldots,\min\{n,m\}\}$. If $\bX$ is a capacity achieving distribution for the $k$-AMMC, then we may use this input distribution to the AMMC. So the capacity of the $k$-AMMC is a lower bound for the capacity of the AMMC. In particular, $c$ is a lower bound for the capacity of the AMMC.

We aim to derive an upper bound for the capacity of the AMMC, also involving $c$. Let $\bX$ be an input distribution to the AMMC and $\bY$ the corresponding output distribution. The data processing inequality implies that
$I(\bX;\bY,\rk(\bX))\geq I(\bX;\bY)$,
and so using the chain rule we find
\begin{align}
\nonumber
I(\bX;\bY)&\leq I(\bX;\bY,\rk(\bX))\\
\nonumber
&=I(\bX;\rk(\bX))+I(\bX;\bY|\rk(\bX))\\
\nonumber
&=H(\rk(\bX))+I(\bX;\bY|\rk(\bX))\\
&\leq\log_q(\min\{n,m\}+1)+I(\bX;\bY|\rk(\bX)),
\label{eqn:rank_term}
\end{align}
since $\rk(\bX)$ is a random variable with support at most $\min\{n,m\}+1$.

But $c$ is an upper bound for $I(\bX;\bY|\rk(\bX))$, since
\begin{align}
I(\bX,\bY|\rk(\bX))&= \sum_{k=0}^{\min\{m,n\}}\Prob(\rk(\bX)=k)\cdot I(\bX,\bY|\rk(\bX)=k)\nonumber\\
&\leq \sum_{k=0}^{\min\{m,n\}}\Prob(\rk(\bX)=k)\cdot c=c.
\label{eqn:max_information}
\end{align}

The inequalities~\eqref{eqn:rank_term} and~\eqref{eqn:max_information} combine to show that the capacity of the AMMC is bounded above by $c+\log_q\min\{n,m\}$. Since $\log_q\min\{n,m\}\rightarrow 0$ as $q\rightarrow\infty$, and since we have a lower bound of $c$ on the capacity, we see that the capacity of the AMMC tends to $c$ as $q\rightarrow\infty$. So to prove the theorem, it suffices to determine the value of $c$.

Define $\delta$, $r$ and $s$ as in Theorem~\ref{thm:k_AMMC_capacity}. Suppose first that $k$ is such that $k+t>m$ and so $\delta=(k+t)-m$. We claim that the capacity of the $k$-AMMC tends to $0$ in this situation. To prove this claim, assume first that $m\leq n$. Then $r=\min\{k+t,n,m\}=m$ and $s=\min\{k,n+k+t-m-t\}=k$. But then $m+s-k-r=0$ and so the capacity tends to $0$ by Theorem~\ref{thm:k_AMMC_capacity}. Now assume that $n\leq m$. Here $r=n$ and $s=k+n-m$. So again $m+s-k-r=m-n=0$ and the capacity again tends to $0$, establishing our claim.

Now assume that $k+t\leq m$, and so $\delta=0$. We have $r=\min\{t+k,n\}=t+\min\{k,n-t\}$ and $s=\min\{k,n-t\}$. So $r-s=t$, and Theorem~\ref{thm:k_AMMC_capacity} implies that the capacity of the $k$-AMMC tends to $(m-t-k)s$. When $k\geq n-t$ this capacity is $(m-t-k)(n-t)$, which is a decreasing function of $k$. As we are maximising the capacity, we may assume without loss of generality that $k\leq n-t$ and the capacity is $(m-t-k)k$. Note that, since we are in the situation when $k+t\leq m$ and $k\leq n-t$, the integer $k$ lies in the interval $[0,\min\{n-t,m-t\}]$.

The function $f(x)=(m-t-x)x$ has a unique maximum at $x=(m-t)/2$. Now $x\geq 0$ as $t\leq \min\{m,n\}\leq m$, and clearly $x\leq m-t$. Moreover $x< n-t$ if and only if $m+t< 2n$. We have two cases to consider. We first suppose that $m+t\geq 2n$. Then $x\geq n-t\geq \min\{n-t,m-t\}$ and so $x$ lies above the interval $[0,\min\{n-t,m\})$. Hence in this case $f$ is an increasing function on $[0,\min\{n-t,m-t\}]$ and so the capacity is maximal when $k=\min\{n-t,m-t\}$. But note that $m\geq 2n-t\geq n$ (as  $m+t\geq 2n$ and $t\leq\min\{n,m\}$) and so the capacity is maximised  when $k=\min\{n-t,m-t\}=n-t$ at the value $(m-n)(n-t)$. Hence the AMMC capacity is $(m-n)(n-t)$ when $m+t\geq 2n$, and the first case of the theorem follows. Secondly, we suppose that $m+t< 2n$. In this case $x$ lies in $[0,\min\{n-t,m\}]$ and so the capacity is maximised when $k=\lfloor x\rfloor=\lfloor (m-t)/2\rfloor$ at the value $\lceil (m-t)/2\rceil\cdot \lfloor (m-t)/2\rfloor$. This establishes the second case of the theorem, as required.
\end{proof}

\section{When the Field size is Fixed}
\label{sec:fixed_field}

Throughout this section, we assume that the cardinality $q$ of the underlying finite field is fixed. Let $\nu,\mu,\tau\in (0,1)$ be fixed real numbers such that $\tau\leq\min\{\nu,\mu\}$. For a positive integer $\ell$ (which we think of as a size parameter), set $n=\lfloor \nu \ell\rfloor$, $m=\lfloor\mu \ell\rfloor$, $t=\lfloor \tau\ell\rfloor$. We aim (Theorem~\ref{thm:fixed_q_AMMC} below) to determine a good approximation for the capacity of the AMMC as $\ell\rightarrow\infty$ with $\nu$, $\mu$ and $\tau$ fixed. Indeed, we will show that the capacity of the AMMC is $(\mu-\nu)(\nu-\tau)\ell^2+O(\ell\log_q \ell)$ when $\mu+\tau\geq 2\nu$, and is $\frac{1}{4}(\mu-\tau)\ell^2+O(\ell\log_q\ell)$  otherwise. This generalises and strengthens~\cite[Proposition~4]{SilvaKoetter}, which shows, in our notation, that the capacity is $(\mu-\nu)(\nu-\tau)\ell^2+o(\ell^2)$ when $\mu\geq 2\nu$. (Note that the definition of $\tau$ in~\cite{SilvaKoetter} and our paper are different. Note also that the authors in~\cite{SilvaKoetter} phrase their result in terms of \emph{normalised capacity}, which is the result of dividing the capacity by $nm$. So their formulae differ from  ours by a factor of $\nu\mu\ell^2$.)

We will mimic the proofs in Sections~\ref{sec:k_ammc} and~\ref{sec:AMMC}, and only include details at points where the proofs differ. We note that Lemmas~\ref{lem:symmetric} and~\ref{lem:prob_uniform} hold in our situation without change.  We will prove analogues of Lemmas~\ref{lem:subspace_intersection} and~\ref{lem:generic_dimensions}  and Theorem~\ref{thm:k_AMMC_capacity} in order to provide good bounds on the $k$-AMMC in our asymptotic regime. Just as in Section~\ref{sec:k_ammc}, the most difficult part of the proof is the analogue of Lemma~\ref{lem:generic_dimensions}. In Section~\ref{sec:k_ammc} we showed that the rank of the matrix $Y$ and the dimension of the intersection $\row(X)\cap\row(Y)$ can both be predicted with high probability in the $k$-AMMC when $q\rightarrow\infty$ with $n,m$ and $t$ fixed. In the asymptotic regime we are considering in this section, this is no longer true. But we are able to exploit the fact that the approximate values of $\rk(Y)$ and $\dim(\row(X)\cap\row(Y))$ can be found with high probability. 

The following lemma is an analogue of Lemma~\ref{lem:subspace_intersection}:

\begin{lemma}
\label{lem:qfixed_subspace_intersection}
Define $\mu$, $\tau$, $\ell$, $m$ and $t$ as above. Let $k$ be a non-negative integer such that $k\leq m$. 
Let $U$ and $V$ be uniformly and independently sampled from the set of all subspaces of $\mathbb{F}_q^m$ of dimensions $k$ and $t$ respectively. Define $\delta=\max\{0,(k+t)-m\}$. Then whenever $\ell>\sqrt{t}q^2$,
\[
\dim(U\cap V)\in [\,\delta,\delta+3\sqrt{\log_q\ell}\,],
\]
with probability at least $1-1/\ell^7$.
\end{lemma}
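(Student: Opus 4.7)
The lower bound $\dim(U\cap V)\ge\delta$ is forced deterministically by the dimension formula $\dim(U+V)\le m$, so all the work lies in the upper tail. The plan is to estimate $\Prob(\dim(U\cap V)\ge\delta+j)$ directly via a union bound, which will already produce quadratic-in-$j$ decay strong enough that no further summation is needed: a single application at $j\approx 3\sqrt{\log_q\ell}$ gives the claim.

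For any fixed $s$-dimensional subspace $W$ of $U$, the probability that $W\subseteq V$ equals $\qbinom{m-s}{t-s}/\qbinom{m}{t}$, which by the bounds in Lemma~\ref{lem:qbinom} is at most $4\,q^{-(m-t)s}$. Since $\dim(U\cap V)\ge s$ exactly when some $s$-dimensional subspace of $U$ lies in $V$, a union bound over the at most $4\,q^{(k-s)s}$ subspaces of $U$ of dimension $s$ (again by Lemma~\ref{lem:qbinom}) yields
$$\Prob\bigl(\dim(U\cap V)\ge s\bigr)\le 16\,q^{(k-s)s-(m-t)s}=16\,q^{-s(s-(k+t-m))}.$$
Writing $s=\delta+j$ with $j\ge 1$, one checks that $s(s-(k+t-m))\ge j^2$ in both cases: if $\delta=0$ then $s=j$ and $s-(k+t-m)=j+(m-k-t)\ge j$; if $\delta=k+t-m>0$ then $s-(k+t-m)=j$ and $s\ge j$. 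Hence
$$\Prob\bigl(\dim(U\cap V)\ge\delta+j\bigr)\le 16\,q^{-j^2}\quad\text{for every integer }j\ge 1.$$

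Applying this at the integer $j=\lfloor 3\sqrt{\log_q\ell}\rfloor+1$, so that $j^2>9\log_q\ell$, gives
$$\Prob\bigl(\dim(U\cap V)>\delta+3\sqrt{\log_q\ell}\bigr)\le 16\,q^{-9\log_q\ell}=16\,\ell^{-9},$$
which is at most $\ell^{-7}$ once $\ell\ge 4$. The hypothesis $\ell>\sqrt{t}\,q^2$ is more than enough to secure this (the degenerate case $t=0$ being trivial since then $V=\{0\}$).

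The only delicate step is the cancellation producing the clean exponent $-s(s-(k+t-m))$ from the product of the three $q$-binomial estimates; any stray linear-in-$s$ term would destroy the quadratic decay $q^{-j^2}$ on which the whole argument depends. Everything else---the deterministic lower bound, the case split verifying $s(s-(k+t-m))\ge j^2$, and the final numerical comparison against $\ell^{-7}$---is routine.
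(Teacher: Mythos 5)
Your proof is correct, and it takes a genuinely different route from the paper's. The paper fixes $V$, writes down the exact count of $k$-dimensional $U$ with $\dim(U\cap V)=\delta+b$ (a product of two $q$-binomials and a power of $q$), divides by $\qbinom{m}{k}$, bounds the result by $q^{4-b^2}$ after algebraic simplification, and then sums over $b\geq 3\sqrt{\log_q\ell}$; the crude bound ``(number of terms)\,$\times$\,(largest term)'' produces the factor $tq^4/\ell^2$, which is exactly why the hypothesis $\ell>\sqrt{t}\,q^2$ appears in the statement. You instead condition on $U$, observe that $\dim(U\cap V)\geq s$ iff some $s$-dimensional subspace $W\leq U$ satisfies $W\subseteq V$, and apply a union bound over the $\qbinom{k}{s}<4q^{(k-s)s}$ choices of $W$, each with $\Prob(W\subseteq V)=\qbinom{m-s}{t-s}/\qbinom{m}{t}<4q^{-(m-t)s}$. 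This gives the tail bound $\Prob(\dim(U\cap V)\geq\delta+j)\leq 16q^{-j^2}$ directly, with no summation over intersection dimensions needed, so the numerical comparison at the end is cleaner (only $\ell\geq 4$ is required, comfortably implied by $\ell>\sqrt{t}\,q^2$ once $t\geq 1$, with $t=0$ trivial). The two approaches rest on the same $q$-binomial estimates and produce the same quadratic decay; the union-bound route is shorter and yields a sharper implicit constant, while the paper's route is more explicit about the per-dimension probabilities, which it reuses as a template in Lemma~\ref{lem:subspace_intersection} and in the proof of Lemma~\ref{lem:qfixed_generic_dimensions}.
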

\begin{proof}
We see that $\dim (U+V)=\dim U+\dim V-\dim(U\cap V)$. So $m\geq \dim (U+V)=k+t-\dim(U\cap V)$. Hence $\dim (U\cap V)\geq \delta$ in all cases. So it remains to show that $\dim (U\cap V)\leq \delta+3\sqrt{\log_q\ell}$ with high probability.

For any fixed subspace $V$, the number of choices for $U$ with $\dim (U\cap V)=\delta+b$, say, is
\[
\qbinom{t}{\delta+b}\qbinom{m-t}{k-\delta-b}q^{(k-\delta-b)(t-\delta-b)},
\]
for $0\leq b\leq \min\{k,t\}-\delta$, and is zero otherwise. By Lemma~\ref{lem:qbinom}, the logarithm to the base $q$ of this expression is bounded above by
\[
(t-\delta-b)(\delta+b)+\log_q4+(m-t-k+\delta+b)(k-\delta-b)+\log_q4+(k-\delta-b)(t-\delta-b).
\]
Since the number of subspaces of $\mathbb{F}_q^m$ of dimension $k$ is at least $q^{(m-k)k}$, by Lemma~\ref{lem:qbinom}, we see that the logarithm to the base $q$ of the probability that $\dim (U\cap V)=\delta+b$ is at most
\[
\begin{split}
(t-\delta-b)(\delta+b)&+\log_q4+(m-t-k+\delta+b)(k-\delta-b)
+\log_q4\\
&+(k-\delta-b)(t-\delta-b)-(m-k)k,\end{split}
\]
which is equal to
\begin{equation}
\label{eqn:prob_bound}
(k+t-m-\delta-b)(\delta+b)
+2\log_q4.
\end{equation}

We claim that the expression~\eqref{eqn:prob_bound} is bounded above by $-b^2+4$. To see this, first note that when $k+t\leq m$, and so $\delta=0$, the expression~\eqref{eqn:prob_bound} can be written as $-b^2+b((k+t)-m)+2\log_q 4$, which is bounded above by
$-b^2+4$ since $\log_q4\leq 2$. When $k+t>m$, and so $\delta=k+t-m$, the same bound holds, since the expression~\eqref{eqn:prob_bound} can be written as
$-b^2+(m-(k+t))b+2\log_q 4$ which is bounded above by $-b^2+4$ also. So the claim is established.

Hence the probability that $\dim(U\cap V)>\delta+3\sqrt{\log_q \ell}$ is at most
\[
\sum_{b=3\sqrt{\log_q \ell}}^{\min\{k,t\}-\delta} q^{4-b^2}\leq tq^4q^{-9\log_q\ell}=(tq^4/\ell^2)/\ell^7.
\]
Since $\ell>\sqrt{t}q^2$, we see that $tq^4/\ell^2<1$ and so the lemma follows.
\end{proof}

Next, we use Lemma~\ref{lem:qfixed_subspace_intersection} to prove an analogue of Lemma~\ref{lem:generic_dimensions}:

\begin{lemma}
\label{lem:qfixed_generic_dimensions}
Define $\nu$, $\mu$, $\tau$, $\ell$, $n$, $m$ and $t$ as above. Let $k$ be an integer such that $0\leq k\leq \min\{n,m\}$, and define $\delta=\max\{0,k+t-m\}$ as before. Define $f=13\log_q \ell$. Let $X$ and $W$ be $n\times m$ matrices, chosen uniformly and independently at random subject to $\rk(X)=k$ and $\rk(W)=t$. Let $Y=A(X+W)$, where $A$ is chosen uniformly at random from all $n\times n$ invertible matrices. Whenever $\ell>16+\sqrt{1+t+n}+q^9$, we have
\begin{gather}
\rk(Y)\in \min\{t+k,n,m\}+[-f,0],\text{ and }\label{eqn:qfixed_Y_rank}\\
\dim(\row(X)\cap\row(Y))\in \min\{k,n-(t-\delta)\}+[-f,f]\label{eqn:qfixed_XY_intersection}.
\end{gather}
with probability $p$, where $p>1-1/\ell^3$.
\end{lemma}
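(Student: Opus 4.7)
The plan is to quantify every probabilistic step in the proof of Lemma~\ref{lem:generic_dimensions}, replacing ``the probability tends to~$1$'' by explicit tail bounds of the form $O(q^{-c})$. As there, I would reduce to $A=I$ (so $Y=X+W$) and sample $(X,W)$ in two stages: pick subspaces $U,V$ uniformly at random of dimensions $k$ and $t$, and then pick rows $x_i\in U$ and $w_i\in V$ independently and uniformly. Writing $\bE$ for the indicator of the event $\{\row(X)=U,\row(W)=V\}$, the conditional law of $(X,W)$ given $\bE=1$ is the desired one, and $\Pr(\bE=1)\ge\prod_{j\ge 1}(1-q^{-j})^2$ is bounded below by an absolute positive constant $c_0$, so it suffices to control the unconditional failure probability.

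Setting $\gamma=\dim(U\cap V)$, Lemma~\ref{lem:qfixed_subspace_intersection} already puts $\gamma\in[\delta,\delta+3\sqrt{\log_q\ell}\,]$ outside an event of probability $1/\ell^7$. Given such $U,V$, I would then analyse $r_1:=\dim\pi(\row(Y))$, where $\pi\colon U+V\to(U+V)/U$ is the quotient map; since $\pi(y_i)=\pi(w_i)$ and the $\pi(w_i)$ are i.i.d.\ uniform in the $(t-\gamma)$-dimensional space $V/(V\cap U)$, a subspace union bound combined with Lemma~\ref{lem:qbinom} gives
\[
\Pr\bigl(r_1\le(t-\gamma)-c\bigr)\le\qbinom{t-\gamma}{c}q^{-cn}\le 4q^{-c(n-(t-\gamma)+c)},
\]
which is at most $4/\ell^5$ for $c=\lceil 5\log_q\ell\rceil$. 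Note that the hypothesis $t\le n$ forces $\min\{n,t-\gamma\}=t-\gamma$, so $r_1$ is indeed controlled about its maximum.

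Next I would mimic the $z_u$ construction of Lemma~\ref{lem:generic_dimensions}: choose $S\subseteq\{1,\dots,n\}$ with $|S|=r_1$ so that $\{\pi(w_j):j\in S\}$ is a basis of $\pi(\row(Y))$, write $\pi(w_u)=\sum_{j\in S}\alpha_{u,j}\pi(w_j)$ for $u\notin S$, and define $z_u=y_u-\sum_{j\in S}\alpha_{u,j}y_j\in\row(Y)\cap U$. The set $S$ and the coefficients $\alpha_{u,j}$ depend only on $(w_i)_{i=1}^n$, and for each $u\notin S$ the random vector $x_u$ appears linearly in $z_u$ and in no other $z_{u'}$. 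Hence, conditional on $(w_i)_{i=1}^n$ and on $(x_j)_{j\in S}$, the $n-r_1$ vectors $\{z_u:u\notin S\}$ are i.i.d.\ uniform in $U$, and the same subspace union bound yields $\dim\langle z_u\rangle\ge\min\{n-r_1,k\}-c$ outside an event of probability $\le 4/\ell^5$; the matching upper bound $\dim(\row(Y)\cap U)\le\min\{n-r_1,k\}$ is automatic from $\rk(Y)\le n$.

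On the intersection of all good events, $\rk(Y)=r_1+\dim(\row(Y)\cap U)=\min\{n,r_1+k\}-O(\log_q\ell)$, and monotonicity in $r_1$ and then in $\gamma$ shows that $\rk(Y)$ and $\dim(\row(X)\cap\row(Y))$ lie in the claimed intervals around $\min\{t+k,n,m\}$ and $\min\{k,n-(t-\delta)\}$ respectively; the slack $f=13\log_q\ell$ comfortably absorbs the three error contributions $3\sqrt{\log_q\ell}$, $5\log_q\ell$, and $5\log_q\ell$, and the intersection inherits a two-sided interval because the $r_1$ and $\gamma$ monotonicities push it in opposite directions. Summing failure probabilities and dividing by $\Pr(\bE=1)\ge c_0$ yields the required $1-1/\ell^3$ once $\ell$ is large enough, with the hypothesis $\ell>16+\sqrt{1+t+n}+q^9$ ensuring both this and the hypothesis $\ell>\sqrt{t}q^2$ of Lemma~\ref{lem:qfixed_subspace_intersection}. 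The main obstacle is the quantitative spanning bound: Corollary~\ref{cor:span} only guarantees a constant success probability when $q$ is fixed, so it must be replaced by the explicit $\qbinom{t-\gamma}{c}q^{-cn}$ tail estimate, and one must take care that the $z_u$ really are independent and uniform after conditioning on the correct $\sigma$-algebra.
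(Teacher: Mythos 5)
Your proposal is correct and follows the same overall architecture as the paper's proof: reduce to $A=I$, sample $(X,W)$ through $U,V$ and rows, condition on the event $\bE$ that the rows span, invoke Lemma~\ref{lem:qfixed_subspace_intersection} to control $\dim(U\cap V)$, and then use the $z_u$ construction of Lemma~\ref{lem:generic_dimensions} to lower-bound $\rk(Y)$ and the intersection dimension. The difference lies in how the quantitative spanning step is handled. The paper argues row by row: it shows the first $t-\delta-3\sqrt{\log_q\ell}-6\log_q\ell$ rows are independent modulo $U$ via a telescoping product, and then, when processing the next block of rows, has to separately account for those (at most $3\sqrt{\log_q\ell}+6\log_q\ell$) rows that might still add something new modulo $U$. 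You replace this with a single union bound over codimension-$c$ subspaces, giving $\Pr(r_1\le(t-\gamma)-c)\le\qbinom{t-\gamma}{c}q^{-cn}\le 4q^{-c(n-(t-\gamma)+c)}$, and you choose the index set $S$ adaptively (as a measurable function of the $w_i$) rather than fixing it to the first $t-\delta-\cdots$ indices; this cleanly eliminates the paper's bookkeeping about which later rows are ``new modulo $U$''. You also obtain the bound on $\dim(\row(X)\cap\row(Y))$ by directly controlling $\dim(\row(Y)\cap U)$ and using $\rk(Y)=r_1+\dim(\row(Y)\cap U)$, whereas the paper derives it from its rank bound via the identity $\dim(\row(X)\cap\row(Y))=\rk(Y)-t+\dim(U\cap V)$; these are equivalent in content. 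Your conditioning analysis for the $z_u$ (that, given $(w_i)$ and $(x_j)_{j\in S}$, the $z_u$ for $u\notin S$ are i.i.d.\ uniform in $U$) is correct since each $z_u$ is $x_u$ plus a constant in $U$ determined by the conditioning, and your slack budget $3\sqrt{\log_q\ell}+2\lceil 5\log_q\ell\rceil\le f$ holds under the stated hypothesis on $\ell$. In short: same proof strategy, with a tidier implementation of the spanning tail bound.
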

\begin{proof}
As in the proof of Lemma~\ref{lem:generic_dimensions}, we may assume without loss of generality that $A$ is the identity matrix, and so $Y=X+W$.

Just as in the proof of Lemma~\ref{lem:generic_dimensions}, we choose the matrices $X$ and $W$ using the following random process. First we choose subspaces $U$ and $V$ of $\mathbb {F}_q^m$ of dimensions $k$ and $t$ respectively, uniformly and independently at random. We then choose the rows $x_i$ of $X$ independently and uniformly from $U$, and the rows $w_i$ of $W$ independently and uniformly from $V$. Let $\bE$ be the indicator variable for the event that $\row(X)=U$ and $\row(W)=V$. When $\bE=1$ the matrices $X$ and $W$ are uniform such that $\row(X)=U$ and $\row(W)=V$. So the probability $p$ we are interested in is the probability that~\eqref{eqn:qfixed_Y_rank} and~\eqref{eqn:qfixed_XY_intersection} hold, given that $\bE=1$.

Following the proof of Lemma~\ref{lem:generic_dimensions}, we see that
\[
\Prob(\bE=1)\geq \prod_{j=1}^{k}(1-q^{-j})\prod_{j=1}^{t}(1-q^{-j}).
\]
Unlike in Lemma~\ref{lem:generic_dimensions}, the right hand side of this expression does not tend to~1. However, by Lemma~\ref{lem:qbinom},
 $\Prob(\bE=1)> (1/4)^2=1/16$. 

Let $\pi$ be the `bad' probability that our random process produces matrices $X$ and $Y$ such that~\eqref{eqn:qfixed_Y_rank} or~\eqref{eqn:qfixed_XY_intersection} fail.  Then the probability $1-p$ that~\eqref{eqn:qfixed_Y_rank} or~\eqref{eqn:qfixed_XY_intersection} fail given $\bE=1$ is at most $\pi/\Prob(\bE=1)\leq 16\pi$. We aim to show that $\pi\leq 1/\ell^4$ for the values of $\ell$ we consider. This is sufficient to prove the lemma, since $16\pi\leq 16/\ell^4\leq 1/\ell^3$ as we are assuming that $\ell\geq 16$.

With probability at least $1-1/\ell^7$, $\dim(U\cap V)=z$, where $z\in [\delta,\delta+3\sqrt{\log_q \ell}]$, by Lemma~\ref{lem:qfixed_subspace_intersection}. Assume that $z\in [\delta,\delta+3\sqrt{\log_q \ell}]$, so
\[
t-\delta-3\sqrt{\log_q\ell}\leq \dim(V\bmod U)\leq t-\delta.
\]
Define $r_i$ to be the dimension of the space spanned by the first $i$ rows of $Y$, so $r_i=\dim \langle y_1,y_2,\ldots ,y_i\rangle$. Now $y_i=x_i+w_i=w_i\bmod U$, and so the vectors $y_i\bmod U$ are uniformly and independently chosen elements from $V\bmod U$. The probability that the first $t-\delta-3\sqrt{\log_q\ell}-6\log_q\ell$ of the vectors $y_i$ are linearly independent modulo $U$ is
\begin{align*}
\prod_{i=1}^{t-\delta-3\sqrt{\log_q\ell}-6\log_q\ell}(q^{\dim(U\cap V)}-q^i)/q^{\dim(U\cap V)}&\geq\left(1-q^{-6\log_q\ell}\right)^{t}\\
&\geq 1-t/\ell^6.
\end{align*}
So $r_i=i$ for $i\leq t-\delta-3\sqrt{\log_q\ell}-6\log_q\ell$ with probability at least $1-t/\ell^6$. Suppose that this high probability event occurs: $r_i=i$ for $i\leq t-\delta-3\sqrt{\log_q\ell}-6\log_q\ell$. Consider the next $\min\{k,n-(t-\delta)\}-6\log_q\ell$ rows of $Y$. For at most $3\sqrt{\log_q\ell}+6\log_q\ell$ of these vectors $y_i$, it is possible that $y_i\bmod U$ is not in the span of $y_1,y_2\ldots ,y_{i-1}\bmod U$ and so $r_i=r_{i-1}+1$ with probability~$1$. Otherwise, the argument in the proof of Lemma~\ref{lem:generic_dimensions} shows that there is a linear combination of $y_i$ and previous rows that produces a uniformly chosen element of $U$; with probability at least $1-q^{-6\log_q\ell}=1-1/\ell^6$ this vector does not lie in the span of previous rows. Hence the event $E_0$ that
\begin{gather*}
\text{$r_i=i$ for }i\leq t-\delta-3\sqrt{\log_q\ell}-6\log_q\ell+\min\{k,n-(t-\delta)\}-6\log_q\ell\\
\text{ and }\dim(U\cap V)\in [\delta,\delta+3\sqrt{\log_q \ell}]
\end{gather*}
occurs with probability at least
\[
(1-1/\ell^7)(1-t/\ell^6)(1-1/\ell^6)^{\min\{k,n-(t-\delta)\}-6\log_q\ell}.
\]
We see that this probability is bounded below by $(1-1/\ell^7)(1-t/\ell^6)(1-n/\ell^6)$, which is at least $1-1/\ell^7-t/\ell^6-n/\ell^6$. Now $\ell\geq \sqrt{1+t+n}$, and so
\begin{align*}
1-1/\ell^7-t/\ell^6-n/\ell^6&\geq 1-(1+t+n)/\ell^6\\
&\geq 1-\ell^2/\ell^6.
\end{align*}
So $E_0$ occurs with probability at least $1-1/\ell^4$.

When the event $E_0$ occurs, since $\rk(Y)\geq r_i$ for any $i$, we see that
\begin{align*}
\rk(Y)&\geq t-\delta-3\sqrt{\log_q\ell}-6\log_q\ell+\min\{k,n-(t-\delta)\}-6\log_q\ell\\
&\geq \min\{k+t-\delta,n\}-f
\end{align*}
since $\ell\geq q^9$ (which implies that $3\sqrt{\log_q\ell}\leq\log_q\ell$). Note that the definition of $\delta$ shows that $\min\{k+t-\delta,n\}=\min\{k+t,n,m\}$. Since $\row(Y)\subseteq U+V$, we see that $\rk(Y)\leq k+t$. Moreover, the rank of an $n\times m$ matrix can be at most $\min\{n,m\}$. Hence~\eqref{eqn:qfixed_Y_rank} holds when $E_0$ occurs. 

We have $\row(X)+\row(Y)=\row(X)+\row(W)$. Hence, considering the dimensions of these subspaces, we find
\[
\rk(X)+\rk(Y)-\dim(\row(X)\cap\row(Y))=\rk(X)+\rk(W)-\dim(U\cap V),
\]
and so
\begin{align*}
\dim(\row(X)\cap\row(Y))&=\rk(Y)-\rk(W)+\dim(U\cap V)\\
&=\rk(Y)-t+\dim(U\cap V).
\end{align*}
When $E_0$ holds,
\[
\dim(U\cap V)\in [\delta,\delta+3\sqrt{\log_q \ell}]\subseteq [\delta,\delta+f]
\]
and $\rk(Y)\in \min\{k+t-\delta,n\}+[-f,0]$. Hence
\[
\dim(\row(X)\cap\row(Y))\in \min\{k,n-(t-\delta)\}+[-f,f],
\]
and so $\eqref{eqn:qfixed_XY_intersection}$ holds. Hence the lemma follows.
\end{proof}

We are now in a position where we can find a good approximation for the capacity of the $k$-AMMC. The following theorem is an analogue of Theorem~\ref{thm:k_AMMC_capacity}:

\begin{theorem}
\label{thm:qfixed_k_AMMC_capacity}
Let $q$ be a fixed prime power.
Let $\nu,\mu,\tau\in (0,1)$ be fixed real numbers with the property that $\tau\leq\min\{\nu,\mu\}$. For a positive integer $\ell$, set $n=\lfloor \nu \ell\rfloor$, $m=\lfloor\mu \ell\rfloor$ and $t=\lfloor \tau\ell\rfloor$. Let $k$ be an integer such that $0\leq k\leq \min\{n,m\}$. Define integers $\delta$, $r$ and $s$ by $\delta=\max\{0,(k+t)-m\}$, $r=\min\{t+k,m,n\}$ and $s=\min\{k,n+\delta-t\}$. Define $f=13\log_q\ell$. The capacity of the $k$-AMMC (taking logarithms to the base $q$) lies between
\[
(m+s-k-r)s-\big(2(m+n)f+(m+n)^2/\ell^3+\log_q 4\big)
\]
and
\[
(m+s-k-r)s+\big((m+n)f+\log_q4+4(m+n)^2/\ell^3\big)
\]
whenever $\ell\geq 16+\sqrt{1+t+n}+q^9+14^2/(\nu+\mu)^2$. 
\end{theorem}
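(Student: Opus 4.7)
The plan is to follow the proof strategy of Theorem \ref{thm:k_AMMC_capacity}, replacing asymptotic statements with quantitative error estimates. By Lemma \ref{lem:symmetric}, the capacity equals $I(\bX; \bY) = H(\bX) - H(\bX|\bY)$ with $\bX$ uniform on $n \times m$ matrices of rank $k$. Using $\rho(n,m,k) = \qbinom{m}{k} \prod_{i=0}^{k-1}(q^n - q^i)$ together with Lemma \ref{lem:qbinom} gives $H(\bX) = (m-k)k + nk + \epsilon_1$ with $|\epsilon_1| \leq \log_q 4$.

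For $H(\bX|\bY)$, introduce $R = \rk(\bY)$ and $S = \dim(\row(\bX) \cap \row(\bY))$. Since $S$ is a function of $(\bX, \bY)$, one has $H(\bX|\bY) = H(S|\bY) + H(\bX|\bY, S)$, and Lemma \ref{lem:prob_uniform} implies that the distribution of $\bX$ given $\bY$ and $S$ is uniform over a set of size
\[
\#X(R, S) = \qbinom{R}{S}\qbinom{m-R}{k-S}q^{(k-S)(R-S)}\prod_{i=0}^{k-1}(q^n - q^i).
\]
Lemma \ref{lem:qbinom} then yields $\log_q \#X(r', s') = E(r', s') + nk + \eta(r', s')$, where $E(r', s') := (r'-s')s' + (m-k)(k-s')$ and $|\eta| \leq 2\log_q 4$; the $\prod(1 - q^{i-n})$ correction cancels with the corresponding term in $H(\bX)$. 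Since $S$ takes at most $n+1$ values, $H(S|\bY) \leq \log_q(n+1)$.

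The key quantitative step uses Lemma \ref{lem:qfixed_generic_dimensions}: the event $G$ that $R \in r + [-f, 0]$ and $S \in s + [-f, f]$ has probability at least $1 - 1/\ell^3$. Writing $R = r + a$, $S = s + b$, a direct expansion gives the pointwise identity
\[
E(R, S) - E_0 = (a - b)(s + b) + (r - s - m + k)\, b,
\]
and the bounds $|a|, |b| \leq f$, $s + b \leq n + f$, and $|r - s - m + k| \leq m$ yield $|E(R, S) - E_0| = O((m+n)f + f^2)$ on $G$. On $\bar G$, the crude bound $E(R, S) \leq (m+n)^2/4$ together with $\Prob(\bar G) \leq 1/\ell^3$ contributes an additional $O((m+n)^2/\ell^3)$ term to $|E[E(R, S)] - E_0|$. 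Assembling everything and using the algebraic identity $(m-k)k - E_0 = (m+s-k-r)s$ produces $I(\bX;\bY) = (m+s-k-r)s + O((m+n)f + (m+n)^2/\ell^3)$ with explicit constants. The main obstacle will be tracking these error terms carefully and verifying that the hypothesis $\ell \geq 16 + \sqrt{1+t+n} + q^9 + 14^2/(\nu+\mu)^2$ suffices to absorb the lower-order $f^2$, $\log_q(n+1)$, and constant corrections into the $(m+n)f$ and $(m+n)^2/\ell^3$ terms of the stated bounds.
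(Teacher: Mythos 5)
Your proposal follows essentially the same route as the paper: take $\bX$ uniform by Lemma~\ref{lem:symmetric}, estimate $H(\bX)$ via Lemma~\ref{lem:qbinom}, partition the computation of $H(\bX|\bY)$ by $\big(\rk(\bY),\dim(\row\bX\cap\row\bY)\big)$, invoke Lemma~\ref{lem:qfixed_generic_dimensions} to confine these to the window $[-f,0]\times[-f,f]$ around $(r,s)$ with probability $\geq 1-1/\ell^3$, and then expand $(r'-s')s'+(m-k)(k-s')$ about $(r,s)$ to bound the deviation. Your explicit chain-rule step $H(\bX|\bY)=H(S|\bY)+H(\bX|\bY,S)$ with the $\log_q(n+1)$ bound is in fact slightly more careful than the paper (which states the decomposition as an equality without the $H(S|\bY)$ correction), and your observation that the $\prod_{i=0}^{k-1}(1-q^{i-n})$ factor cancels exactly between $H(\bX)$ and $H(\bX|\bY,S)$ tightens a constant; neither changes the order of the error. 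Two small slips: (i) the bound $|r-s-m+k|\leq m$ should be $|r-s-m+k|\leq\max\{m,n\}$, which still gives $O((m+n)f)$; (ii) you have only sketched, not carried out, the final absorption of the $f^2$, $\log_q(n+1)$, and constant terms under the stated hypothesis on $\ell$, which is precisely the bookkeeping the paper does via its bounds on $\overline{h}$ and $\underline{h}$.
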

Note that $m$ and $n$ are linear in $\ell$, whereas $f$ is logarithmic. So the theorem says that the capacity of the $k$-AMMC is approximately $(m+s-k-r)s$. 
\begin{proof}
As before, we take $\bX$ to be the uniform input distribution for the $k$-AMMC, and $\bY$ to be the corresponding output distribution. This is a capacity achieving distribution, by Lemma~\ref{lem:symmetric}, and so we can estimate the capacity by estimating $H(\bX)-H(\bX|\bY)$. The number of $n\times m$ matrices of rank $k$ is at most $\qbinom{m}{k} q^{nk}$, and is at least $\qbinom{m}{k}q^{nk}/4$, since the first $k$ rows of a matrix whose rows lie in a subspace of dimension $k$ span the subspace with probability at least 1/4, by Corollary~\ref{cor:span}. Since $\bX$ is uniform, Lemma~\ref{lem:qbinom} implies that
\begin{equation}
\label{eqn:HX}
(m-k)k+nk-\log_q4\leq H(\bX)\leq (m-k)k+nk+\log_q4,
\end{equation}
just as in the proof of Theorem~\ref{thm:k_AMMC_capacity}.

To estimate $H(\bX|\bY)$, define $I$ to be the set of pairs $(r',s')$ of non-negative integers such that:
\begin{gather*}
r'\in \min\{t+k,n,m\}+[-f,0],\text{ and}\\
s'\in \big(\min\{k,n-(t-\delta)\}+[-f,f]]\big)\cap [0,\min\{r,k\}]
\end{gather*}
Define 
\begin{align*}
\underline{h}=\min_{(r',s')\in I}\big\{\Prob(\rk(Y)&=r',\dim(\row(X)\cap\row(Y))=s')\cdot\\
&H(\bX|\bY,\rk(\bY)=r',\dim(\row(\bY)\cap\row(\bX))=s')\big\}
\end{align*}
and
\begin{align*}
\overline{h}=\max_{(r',s')\in I}\big\{\Prob(\rk(Y)&=r',\dim(\row(X)\cap\row(Y))=s')\cdot\\
&H(\bX|\bY,\rk(\bY)=r',\dim(\row(\bY)\cap\row(\bX))=s')\big\}.
\end{align*}
Lemma~\ref{lem:qfixed_generic_dimensions} and the method in the proof of Theorem~\ref{thm:k_AMMC_capacity} shows that
\begin{equation}
\label{eqn:HXY}
p\underline{h}\leq H(\bX|\bY)\leq p\overline{h}+(1-p)mn,
\end{equation}
where $1\geq p\geq 1-1/\ell^3$.
We note that, in our asymptotic regime, $mn/\ell^3\rightarrow 0$. And so to estimate $H(\bX|\bY)$ it suffices to provide a lower bound for $\underline{h}$ and an upper bound for $\overline{h}$.

When $Y$ is fixed such that $\rk(Y)=r'$, the number of choices for $\row(X)$ such that $\dim(\row(X)\cap\row(Y))=s'$ is
\[
\qbinom{r'}{s'} \qbinom{m-r'}{k-s'}q^{(k-s')(r'-s')}.
\]
The number of matrices $X$ with a given row space $U$ of rank $k$ is at most $q^{kn}$ (as every row of $X$ lies in $U$). At least $1/4$ of the matrices $X$ whose rows lie in $U$ have $\row(X)=U$, since $k$ randomly chosen vectors from $U$ span $U$ with probability at least $1/4$, by Corollary~\ref{cor:span}. So the number of matrices $X$ with a given row space $U$ of rank $k$ is at least $q^{kn}/4$. By Lemma~\ref{lem:qbinom}, we see that the number of possibilities for $X$ is at most $q$ to the power
\begin{align*}
(r'-s')s'+&\log_q 4 +(m-r'-k+s')(k-s')\\
&\quad +\log_q 4+ (k-s')(r'-s')+kn\\
&= (r'-s')s' +(m-k)(k-s')+kn+2\log_q 4
\end{align*}
and, similarly, at least $q$ to the power
\[
(r'-s')s' +(m-k)(k-s')+kn-\log_q4.
\]
These expressions are upper and lower bounds for the conditional entropy $H(\bX|\bY,\rk(\bY)=r',\dim(\row(\bY)\cap\row(\bX))=s')$. 

When $(r',s')\in I$ we see that $r-f\leq r'\leq r$ and $s-f\leq s'\leq s+f$, where $r$ and $s$ are defined in the statement of the theorem. Hence
\begin{align*}
\overline{h}&\leq (r-s+f)(s+f)+(m-k)(k-s+f)+kn+2\log_q 4\\
&=(r-s)s+(m-k)(k-s)+kn+2\log_q 4+f^2+rf+(m-k)f.
\end{align*}
Now, $rf+(m-k)f\leq (m+n)f$, since $r\leq t+k\leq n+k$. Since $f\geq 16$, we see that $2\log_q4<f$ and so $2\log_q 4+f^2\leq f(f+1)$. Hence
$\overline{h}\leq (r-s)s+(m-k)(k-s)+kn+(f+1)f+(m+n)f$.
Also
\[
f+1\leq 14\log_q\ell\leq 14\sqrt{\ell}-2=14\ell/\sqrt{\ell}-2\leq (\nu+\mu)\ell-2\leq m+n,
\]
since $\ell\geq 14^2/(\nu+\mu)^2$. So
\[
\overline{h}\leq (r-s)s+(m-k)(k-s)+kn+2(m+n)f.
\]
Using this bound together with the expressions~\eqref{eqn:HX} and~\eqref{eqn:HXY} we see that
\begin{align*}
I(\bX;\bY)&=H(\bX)-H(\bX|\bY)\\
&\geq (m-k)k+nk-\log_q 4-p\overline{h}-(1-p)mn\\
&\geq (m-k)k+nk-\log_q 4-\overline{h}-mn/\ell^3\\
&\geq (m-k)s+(s-r)s-\log_q4-2(m+n)f-(m+n)^2/\ell^3.
\end{align*}
Hence the lower bound of the theorem follows.

Turning to $\underline{h}$, we see that
\begin{align*}
\underline{h}&\geq (r-s-2f)(s-f) +(m-k)(k-s-f)+kn-\log_q4\\
&=(r-s)s +(m-k)(k-s)+kn\\
&\quad-\log_q4-2fs+2f^2-(r-s)f-(m-k)f\\
&\geq (r-s)s +(m-k)(k-s)+kn-(r+s)f-(m-k)f\\
&\quad\quad\text{(as $2f^2\geq \log_q 4$)}\\
&\geq (r-s)s +(m-k)(k-s)+kn-(m+n)f
\end{align*}
since $s\leq k$ and $r\leq n$.
Using this bound and the expressions~\eqref{eqn:HX} and~\eqref{eqn:HXY}, we see that
\begin{align*}
I(\bX;\bY)&=H(\bX)-H(\bX|\bY)\\
&\leq  (m-k)k+nk+\log_q 4-p\underline{h}\\
&\leq (m-k)k+nk+\log_q 4\\
&\quad \quad-(1-1/\ell^3)\big(
(r-s)s+(m-k)(k-s)+kn-(m+n)f\big)\\
&\leq (m-k)s+(s-r)s+(m+n)f+\log_q4+4(m+n)^2/\ell^3.
\end{align*}
Hence we have established the upper bound on the capacity, and the theorem is proved.
\end{proof}

\begin{theorem}
\label{thm:fixed_q_AMMC}
Let $q$ be a fixed prime power.
Let $\nu,\mu,\tau\in (0,1)$ be fixed real numbers with the property that $\tau\leq\min\{\nu,\mu\}$. For a positive integer $\ell$, set $n=\lfloor \nu \ell\rfloor$, $m=\lfloor\mu \ell\rfloor$ and $t=\lfloor \tau\ell\rfloor$. The capacity of the AMMC (taking logarithms to the base $q$) is $(\mu-\nu)(\nu-\tau)\ell^2+O(\ell\log_q\ell)$ when $\mu+\tau\geq 2\nu$, and is $\frac{1}{4}(\mu-\tau)\ell^2+O(\ell\log_q\ell)$ otherwise.
\end{theorem}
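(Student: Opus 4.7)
The plan is to mimic the proof of Theorem~\ref{thm:AMMC_capacity}, using Theorem~\ref{thm:qfixed_k_AMMC_capacity} in place of Theorem~\ref{thm:k_AMMC_capacity}. Let $c$ denote the maximum of the capacities of the $k$-AMMC as $k$ ranges over $\{0,1,\ldots,\min\{n,m\}\}$. The reduction from the proof of Theorem~\ref{thm:AMMC_capacity} (the chain-rule argument together with the bound $H(\rk(\bX))\leq\log_q(\min\{n,m\}+1)$) makes no use of the large-$q$ regime, and so it applies verbatim to show that the capacity of the AMMC lies in the interval $[c,\,c+\log_q(\min\{n,m\}+1)]$. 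Since $\log_q(\min\{n,m\}+1)=O(\log_q\ell)$, it suffices to estimate $c$ up to an additive $O(\ell\log_q\ell)$.

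For each admissible $k$, Theorem~\ref{thm:qfixed_k_AMMC_capacity} gives the capacity of the $k$-AMMC as $(m+s-k-r)s+E_k$, where $r=\min\{t+k,m,n\}$, $s=\min\{k,n+\delta-t\}$, $\delta=\max\{0,k+t-m\}$, and the error term satisfies $|E_k|\leq 2(m+n)f+4(m+n)^2/\ell^3+\log_q 4=O(\ell\log_q\ell)$ uniformly in $k$. Hence
\[
c=\max_{0\leq k\leq\min\{n,m\}}(m+s-k-r)s\,+\,O(\ell\log_q\ell),
\]
and it remains to carry out the same optimisation in $k$ as in the proof of Theorem~\ref{thm:AMMC_capacity}. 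Exactly as there, when $k+t>m$ one checks that $m+s-k-r=0$; when $k+t\leq m$ one has $s=\min\{k,n-t\}$ and $m+s-k-r=m-t-k$, so (after noting the maximiser satisfies $k\leq n-t$) the quantity to maximise is $g(k)=(m-t-k)k$ over integers $k\in[0,\min\{n-t,m-t\}]$.

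The real maximum of $g$ occurs at $k^{*}=(m-t)/2$. If $m+t\geq 2n$ then $k^{*}\geq n-t$, so $g$ is increasing on the feasible interval and $c$ is attained at $k=n-t$, giving leading term $(m-n)(n-t)=(\mu-\nu)(\nu-\tau)\ell^2+O(\ell)$. If $m+t<2n$ then $k^{*}$ lies in the feasible interval, so $c$ is attained at $k=\lfloor(m-t)/2\rfloor$, giving leading term $\lceil(m-t)/2\rceil\lfloor(m-t)/2\rfloor=\tfrac14(m-t)^2+O(1)=\tfrac14(\mu-\tau)\ell^2+O(\ell)$. Combining these with the $O(\ell\log_q\ell)$ error from Theorem~\ref{thm:qfixed_k_AMMC_capacity} and the $O(\log_q\ell)$ gap between $c$ and the AMMC capacity yields the stated formulas in the two cases $\mu+\tau\geq 2\nu$ and $\mu+\tau<2\nu$.

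Since all the delicate work has already been invested in Theorem~\ref{thm:qfixed_k_AMMC_capacity} and its supporting lemmas, there is no serious analytic obstacle here; the proof is essentially bookkeeping. The only point requiring minor care is the translation from the integer condition $m+t\geq 2n$ to the real condition $\mu+\tau\geq 2\nu$: a rounding discrepancy of $O(1)$ in the parameters contributes at most $O(\ell)$ to the leading term and is absorbed into the $O(\ell\log_q\ell)$ error, and at the boundary $\mu+\tau=2\nu$ both formulas agree, so there is no inconsistency between the two cases.
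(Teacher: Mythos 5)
Your proposal matches the paper's own proof almost exactly: both reduce to the maximum $c$ of the $k$-AMMC capacities via the chain-rule argument from Theorem~\ref{thm:AMMC_capacity} (contributing an $O(\log_q\ell)$ gap), estimate each $k$-AMMC capacity via Theorem~\ref{thm:qfixed_k_AMMC_capacity} with a uniform $O(\ell\log_q\ell)$ error, and then carry out the same one-variable optimisation over $k$ as in Theorem~\ref{thm:AMMC_capacity}. Your explicit remark about absorbing the rounding discrepancy between the integer condition $m+t\geq 2n$ and the real condition $\mu+\tau\geq 2\nu$ into the error term is a small bit of care the paper leaves implicit, but otherwise the arguments are identical.
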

\begin{proof}
Let $c$ be the maximum capacity of the $k$-AMMC, where $0\leq k\leq \min\{m,n\}$. The argument in the proof of Theorem~\ref{thm:AMMC_capacity} shows that $c$ and the capacity of the AMMC differ by at most $\log_q(\min\{n,m\}+1)=O(\log_q\ell)$ (where the implicit constant depends on $\mu$ and $\nu$, but not on $k$). We use Theorem~\ref{thm:qfixed_k_AMMC_capacity} to estimate the capacity of the $k$-AMMC. In that theorem, note that $(m+n)f=O(\ell\log_q\ell)$ and $(m+n)^2/\ell^3=O(1/\ell)$, where the implicit constants do not depend on $k$. So Theorem~\ref{thm:qfixed_k_AMMC_capacity} shows that the capacity of the $k$-AMMC is $(m+s-k-r)s+O(\ell\log_q\ell)$, where $r$ and $s$ are defined in the theorem. We may now maximise the value of $(m+s-k-r)s$ to find~$c$: the argument of Theorem~\ref{thm:AMMC_capacity} shows $c=(m-n)(n-t)+O(\ell\log_q\ell)$ when $m+t\geq 2n$ and $c=\lceil (m-t)/2\rceil\cdot \lfloor (m-t)/2\rfloor$ otherwise. Since $(m-n)(n-t)=(\mu-\nu)(\nu-\tau)\ell^2+O(\ell)$, and $\lceil (m-t)/2\rceil\cdot \lfloor (m-t)/2\rfloor=(\mu-\tau)^2/4+O(\ell)$, the theorem follows.
\end{proof}

\section{A Coding Scheme}
\label{sec:coding_scheme}

A straightforward modification of the coding scheme in~\cite{SilvaKoetter} works in our more general setting. Consider the situation when $q$ is large. We think of our data as an $x\times y$ matrix $U$, where $x=n-t$, $y=m-n$ when $m+t\geq 2n$, and where $x=\lfloor (m-t)/2\rfloor$ and $y=\lceil (m-t)/2\rceil$  otherwise. Note that in both cases, $x+y=m-t$. Writing $0_{a\times b}$ for the $a\times b$ zero matrix, and writing $I_a$ for the $a\times a$ identity matrix, we input $X$ into the channel, where $X$ is the $n\times m$ matrix defined by
\[
X=\left(\begin{array}{c|c|c}
0_{(n-x)\times t}&0_{(n-x)\times x}&0_{(n-x)\times y}\\\hline
0_{x \times t}&I_x&U
\end{array}\right).
\]
So when $m+t\geq 2n$ we have
\[
X=\left(\begin{array}{c|c|c}
0_{t\times t}&0_{t\times n-t}&0_{t\times m-n}\\\hline
0_{(n-t)\times t}&I_{n-t}&U
\end{array}\right),
\]
and if not, setting $d=\lfloor (m-t)/2\rfloor$ and $d'=\lceil (m-t)/2\rceil$, we have
\[
X=
\left(\begin{array}{c|c|c}
0_{(n-d)\times t}&0_{(n-d)\times d}&0_{(n-d)\times d'}\\\hline
0_{d\times t}&I_{d}&U
\end{array}\right).
\]
On receiving the corresponding matrix $Y=A(X+W)$, we row reduce to produce a matrix~$Z$. Lemma~\ref{lem:generic_dimensions} shows that with high probability we expect $Y$ to have rank $t+x$, and $\row(Y)$ to be the direct sum of $\row(X)$ and $\row(W)$. Indeed, Lemma~\ref{lem:subspace_intersection} shows that with high probability $\row(W)$ intersects the $(m-t)$ dimensional subspace $V$ trivially, where $V=\{(v_1,v_2,\ldots,v_m):v_i=0\text{ for }i\leq t\}$. Note that $\row(X)\subseteq V$. Row reduction does not change the row space, so the techniques in the proof of Lemma~\ref{lem:generic_dimensions} show that with high probability $Z$ has the form
\[
Z=\left(\begin{array}{c|c|c}
I_{t\times t}&\star&\star\\\hline
0_{x\times t}&I_x&U\\\hline
0_{(n-x-t)\times t}&0_{(n-x-t)\times x}&0_{(n-x-t)\times y}
\end{array}\right)
\]
where the first $t$ rows of $Z$ are a basis for $\row(W)$ modulo $\row(X)$ and the next $x$ rows are a basis for $\row(X)$.
(If it is not of this form, we declare a decoding failure.) It is now easy to recover the data $U$ from the matrix $Z$.

When $q$ is fixed with $n$, $m$ and $t$ large, the same coding method will work, but the dimensions of $U$ need to be reduced slightly, so that $Z$ is more likely to have the form above.

The coding scheme above restricts input matrices to having rank $x$, so we are essentially using the $k$-AMMC with $k=x$. In Section~\ref{sec:overview} we remarked that the $k$-AMMC may be asymptotically approximated by the operator channel from~\cite{KoetterKschischang} with $\rho$ erasures and $\tau$ errors, for certain integers $\rho$ and $\tau$. So an alternative approach would be to use a subspace code for encoding. As we remarked in Section~\ref{sec:overview}, this has the great advantage that decoding would also work for a range of practical channels where the number of errors is bounded by $t$ but not fixed. Of course a suitable subspace code must exist, but this is not an issue in practice as $q$ and $m$ are taken to be large.


\begin{thebibliography}{99}
\bibitem{AhlswedeCai} Rudolf Ahlswede, Ning Cai, S.Y. Li and Raymond W. Yeung, `Network information flow',  \emph{IEEE Trans. Inform. Theory} \textbf{46} (2000), 1204--1216.
\bibitem{BassoliMarques} Riccardo Bassoli, Hugo Marques, Jonathan Rodriguez, Kenneth W. Shum and Rahim Tafazolli, `Network Coding Theory: A Survey,'  \emph{IEEE Communications Surveys \& Tutorials} \textbf{15.4}, (2013) 1950--1978.
\bibitem{BlackburnClaridge} Simon R. Blackburn and Jessica Claridge, `Finite field matrix channels for network coding', \emph{IEEE Trans. Inform. Theory} \textbf{65} (2019), 1614--1625.
\bibitem{Cameron} P. J. Cameron, \emph{Combinatorics: Topics, Techniques, Algorithms.} Cambridge Univ. Press, Cambridge, UK, 1994.
\bibitem{CoverThomas} Thomas M. Cover and Joy A. Thomas, \emph{Elements of Information Theory} (2nd Edition), John Wiley \& Sons Inc., Hoboken, New Jersey, 2006.
\bibitem{FengNobrega} Chen Feng, Roberto W. N\'obrega, Frank R. Kschischang and Danilo Silva, `Communication Over Finite-Chain-Ring Matrix Channels', \emph{IEEE Trans. Inform. Theory} \textbf{60} (2014), 5899--5917.
\bibitem{Gallager} Robert G. Gallager, \emph{Information Theory and Reliable Communication}, John Wiley and Sons Inc, New York, 1968.
\bibitem{HoMedard} Tracey Ho, Muriel M\'edard, Ralf K\"{o}tter, David R. Karger, Michelle Effros, Jun Shi, and Ben Leong, `A random linear network coding approach to multicast', \emph{IEEE Trans. Inform. Theory} \textbf{52} (2006), 4413-4430. 
\bibitem{KoetterKschischang} Ralf K\"otter and Frank R. Kschischang, `Coding for errors and erasures in random network coding', \emph{IEEE Trans. Inform. Theory} \textbf{54} (2008), 3579--3591. 
\bibitem{Kschischang}  Frank R. Kschischang, `Network codes', in \emph{Concise Encyclopedia of Coding Theory} (W. Cary Huffman, Jon-Lark Kim, Patrick Sol\'e, eds), Chapman and Hall/CRC, Boca Raton, 2021, 685--714.
\bibitem{Kurz} Sascha Kurz, `Construction and bounds for subspace codes', \emph{arXiv preprint} arXiv:2112.11766 (2023).
\bibitem{LiYeung}  S.Y. Li, Raymond W. Yeung, and Ning Cai. `Linear network coding', \emph{IEEE Trans. Inform. Theory} \textbf{49} (2003), 371--381.
\bibitem{NobregaFeng}  Roberto W. N\'obrega, Chen Feng, Danilo Silva, and Bartolomeu F. Uch\^oa-Filho. `On Multiplicative Matrix Channels over Finite Chain Rings', \emph{2013 International Symposium on Network Coding (NetCod)}, 2013, 1--6.
\bibitem{NobregaSilva}  Roberto W. N\'obrega, Danilo Silva, and Bartolomeu F. Uch\^oa-Filho. `On the capacity of multiplicative finite-field matrix channels', \emph{IEEE Trans. Inform. Theory} \textbf{59} (2013), 4949--4960.
\bibitem{SiavoshaniMohajer} Mahdi J. Siavoshani, Soheil Mohajer, Christina Fragouli and Suhas N. Diggavi. `On the capacity of Noncoherent network coding', \emph{IEEE Trans. Inform. Theory} \textbf{57} (2011), 1046--1066.
\bibitem{SilvaKoetter} Danilo Silva, Frank R. Kschischang, and Ralf K\"{o}tter, `Communication over finite-field matrix channels',
{\em IEEE Trans. Inform. Theory} \textbf{56} (2010), 1296--1305.
\bibitem{YangHo}  Shenghao Yang, Siu-Wai Ho, Jin Meng and En-Hui Yang. `Capacity Analysis of Linear Operator Channels Over Finite Fields' \emph{IEEE Trans. Inform. Theory} \textbf{60} (2014), 4880--4901.
\end{thebibliography}
\end{document}